\newcommand{\field}[1]{\mathbb{#1}}
\newcommand{\group}[1]{\mathscr{#1}}
\newcommand{\bv}[1]{\mathbf{#1}} 
\newcommand{\rv}[1]{\mathsf{#1}} 
\newcommand{\Prob}[1]{\ensuremath{\mathsf{P} \left( #1 \right)}} 
\newcommand{\mat}[1]{{#1}}
\newcommand{\simplex}[1]{\Delta_{#1}}
\DeclareMathOperator{\argmin}{argmin}
\newcommand{\set}[1]{\mathcal{#1}} 
\newtheorem{theorem}{Theorem}
\newtheorem{corollary}{Corollary}[section]
\newtheorem{definition}{Definition}
\newtheorem{remark}{Remark}
\newtheorem{lemma}{Lemma}
\newenvironment{proof}[1][Proof]{\noindent\textbf{#1.} }{\ \rule{0.5em}{0.5em}}
\newtheorem{proposition}{Proposition}[section]
\newacronym{awgn}{AWGN}{Additive White Gaussian Noise}
\newacronym{bac}{BAC}{Binary Asymmetric Channel}
\newacronym{bdsib}{BDSIB}{Binary Double-Sided Information-Bottleneck}
\newacronym{bec}{BEC}{Binary Erasure Channel}
\newacronym{bms}{BMS}{Binary Memoryless Symmetric}
\newacronym{bs}{BS}{base station}
\newacronym{bsc}{BSC}{Binary Symmetric Channel}
\newacronym{bscs}{BSCs}{Binary Symmetric Channels}
\newacronym{ceb}{CEB}{Conditional Entropy Bound}
\newacronym{cr}{CR}{Common Reconstruction}
\newacronym{cran}{C-RAN}{cloud radio access network}
\newacronym{cp}{CP}{Central Proccesor}
\newacronym{dmc}{DMC}{Discrete Memoryless Channel}
\newacronym{dmmac}{DM-MAC}{Discrete Memoryless Multiple Access Channel}
\newacronym{dms}{DMS}{Discrete Memoryless Source}
\newacronym{dpi}{DPI}{Data Proccesing Inequality}
\newacronym{dsbs}{DSBS}{Doubly Symmetric Binary Source}
\newacronym{dsib}{DSIB}{Double-Sided Information-Bottleneck}
\newacronym{epi}{EPI}{Entropy Power Inequality}
\newacronym{gdsib}{GDSIB}{Gaussian Double-Sided Information Bottleneck}
\newacronym{gp}{GP}{Gelf'and-Pinsker}
\newacronym{ib}{IB}{Information Bottleneck}
\newacronym{iid}{i.i.d.}{independent and identically distributed}
\newacronym{infcomb}{IC}{Information Combining}
\newacronym{lhs}{LHS}{Left Hand Side}
\newacronym{sgdsib}{SGDSIB}{Scalar Gaussian Double-Sided Information-Bottleneck}
\newacronym{mac}{MAC}{multiple access channel}
\newacronym{mgl}{MGL}{Mrs. Gerber's Lemma}
\newacronym{mi}{MI}{mutual information}
\newacronym{mu}{MU}{Mobile User}
\newacronym{pf}{PF}{Privacy Funnel}
\newacronym{pmf}{\textsf{pmf}}{probability mass function}
\newacronym{ssib}{SSIB}{Single-Sided Information Bottleneck}
\newacronym{sgssib}{SGSSIB}{Scalar Gaussian Single-Sided Information Bottleneck function}
\newacronym{sawgnssib}{SAWGNSSIB}{Scalar AWGN Single-Sided Information Bottleneck function}
\newacronym{snr}{SNR}{Signal to Noise Ratio}
\newacronym{tibo}{TIBO}{Ternary-Input Binary-Output }
\newacronym{tito}{TITO}{Ternary-Input Ternary-Output }
\newacronym{qiqo}{QIQO}{Quaternary-Input Quaternary-Output }
\newacronym{rhs}{RHS}{right hand side}
\newacronym{rv}{RV}{Random Variable}
\newacronym{rssib}{RSSIB}{Reversed Single-Sided Information Bottleneck function}
\newacronym{rsawgnssib}{RSAWGNSSIB}{Reversed Scalar AWGN Single-Sided Information Bottleneck function}
\newacronym{wtc}{WTC}{Wiretap Channel}
\begin{document}
\title{A Class of Nonbinary Symmetric \\  Information Bottleneck Problems}

\author{%
  \IEEEauthorblockN{Michael Dikshtein and Shlomo Shamai (Shitz)}
  \IEEEauthorblockA{Technion–Israel Institute of Technology\\
              Department of Electrical and Computer Engineering,
              Haifa 3200003, Israel\\
              Email: \{michaeldic@campus, sshlomo@ee\}.technion.ac.il}
}



\maketitle

\begin{abstract}
	We study two dual settings of information processing. Let $ \rv{Y} \rightarrow \rv{X} \rightarrow \rv{W} $ be a Markov chain with fixed joint \acrlong{pmf} $ \mathsf{P}_{\rv{X}\rv{Y}} $ and a mutual information constraint on the pair $ (\rv{W},\rv{X}) $. For the first problem, known as Information Bottleneck, we aim to maximize the mutual information between the random variables $ \rv{Y} $ and $ \rv{W} $, while for the second problem, termed as Privacy Funnel, our goal is to minimize it. In particular, we analyze the scenario for which $ \rv{X} $ is the input, and $ \rv{Y} $ is the output of modulo-additive noise channel. We provide analytical characterization of the optimal information rates and the achieving distributions.
\end{abstract}

\section{Introduction}

Let $ (\rv{X},\rv{Y}) $ be a pair of random variables specified by a fixed bivariate distribution $ \mathsf{P}_{\rv{X}\rv{Y}} $, of cardinality $ |\mathcal{X}| = n $, and respectively $ |\mathcal{Y}|=m $. Consider all random variables $ \rv{W} $ satisfying the Markov chain $ \rv{Y} \rightarrow \rv{X} \rightarrow \rv{W} $ subject to a constraint on the mutual information of the pair $ (\rv{X},\rv{W}) $. 
We consider here two extremes of the information processing problem, the  \acrfull{ib}  function and the  \acrfull{pf}.

The \acrshort{ib} optimization problem, introduced by Tishby et al. \cite{Tishby1999}, is defined as
\begin{equation} \label{eq:ssib_problem_definition}
	\begin{aligned}
		& R_{\mathsf{P}_{\rv{X}\rv{Y}}}(C)   \triangleq 
		& & \underset{\mathsf{P}_{\rv{W}|\rv{X}}}{\text{maximize}}
		& &  I(\rv{Y};\rv{W}) \\
		& 
		& & \text{subject to}
		& &  I(\rv{X};\rv{\rv{W}}) \leq C.
	\end{aligned}
\end{equation}
This problem is illustrated in \autoref{figure:ib_block_diagram}. In our study we aim to determine the maximum value and characterize the achieving conditional distribution $ \mathsf{P}_{\rv{W}|\rv{X}}  $  (test channels) of \eqref{eq:ssib_problem_definition} for a class of  symmetric channels $ \mathsf{P}_{\rv{Y}|\rv{X}} $, and constraints $ C $. We adopt here the slightly irregular notations from \cite{Witsenhausen1975} since our results profoundly rely on that work.

The motivation to study such a model is as follows. Consider a latent random variable $ \rv{Y} $, which constitutes the Markov chain $ \rv{Y} \rightarrow \rv{X} \rightarrow \rv{W} $ and represents a source of information. The user observes a noisy version of $ \rv{Y} $, i.e., $ \rv{X} $, and then tries to compress the observed noisy data such that its reconstructed version, $ \rv{W} $, will be comparable under the maximum mutual information metric to the original data $ \rv{Y} $.  Thus, \eqref{eq:ssib_problem_definition} is essentially a remote source coding problem 
\cite{Wolf1970}, choosing the distortion measure as the  logarithmic-loss. Here $ \rv{W} $ represents the noisy version ($ \rv{X} $) of the source ($ \rv{Y} $) with a constrained number of bits ($ I(\rv{X};\rv{W}) \leq C $), and the goal is to maximize the relevant information in $ \rv{W} $ regarding $ \rv{Y} $ (measured by the mutual information between $ \rv{Y} $ and $ \rv{W} $). In the standard \acrshort{ib} terminology, $ I(\rv{X};\rv{W}) $ is referred to as the complexity of $ \rv{W} $, and $ I(\rv{Y};\rv{W}) $ is referred to as the relevance of $ \rv{W} $.

\begin{figure}[t]
	\centering
	\tikzstyle{block} = [draw, rectangle, minimum width = 0.5cm, minimum height = 0.5cm]
\tikzstyle{sum} = [draw,circle]
\tikzstyle{input} = [coordinate]
\tikzstyle{dummy} = [coordinate]
\tikzstyle{output} = [coordinate]
\tikzstyle{amp} = [draw,shape border rotate = 180,regular polygon,regular polygon sides=3]

\begin{tikzpicture}[thick,scale=0.8, every node/.style={scale=0.8}]
	
	\node [input]	(out1)	at 	(0,0) 		{};
	\node [block]	(source) at 	(3,0)		{\begin{tabular}{c} Bivariate \\ Source \\ $ P_{\rv{X}\rv{Y}} $ \end{tabular}};
	\node [block]	(enc2)	at	(7,0)		{\begin{tabular}{c} Stochastic \\ Encoder \\ $ P_{\rv{W}|\rv{X}} $ \end{tabular}};
	\node [output]	(out2)	at	(10,0)		{};

	\coordinate (dummy1y) at (5.5,-1.25);
	\coordinate (dummy2y) at (8.5,-1.25);
	\node (Iyv) at (7,-1.25) {$ I(\rv{X};\rv{W}) \leq C $};
	
	\coordinate (dummy1uv) at (0.75,-1.75);
	\coordinate (dummy2uv) at (9.25,-1.75);
	\node (Iuv) at (5,-1.75) {$ \max I(\rv{Y};\rv{W})  $};
	
	\draw [<-] (out1) node[left] {$\rv{Y}$} --    (source);
	\draw [->] (source) --  node[above] {$\rv{X}$} (enc2)  ;
	\draw [->] (enc2) --  (out2) node[right] {$ \rv{W} $};
	
	\draw[->] (Iyv) -- (dummy1y);
	\draw[->] (Iyv) -- (dummy2y);
	\draw[dashed] (dummy1y) -- (dummy1y|-enc2);
	\draw[dashed] (dummy2y) -- (dummy2y|-enc2);
	
	\draw[->] (Iuv) -- (dummy1uv);
	\draw[->] (Iuv) -- (dummy2uv);
	\draw[dashed] (dummy1uv) -- (dummy1uv|-enc2);
	\draw[dashed] (dummy2uv) -- (dummy2uv|-enc2);
\end{tikzpicture}
	
	\caption{Block diagram of the \acrlong{ib} function.}
	\label{figure:ib_block_diagram}
\end{figure}

For the particular case where $ (\rv{Y},\rv{X},\rv{W}) $ are discrete random variables, an optimal $ \mathsf{P}_{\rv{W}|\rv{X}} $ can be found by iteratively solving a set of self--consistent equations \cite{Tishby1999}. A generalized Blahuto-Arimoto algorithm \cite{Arimoto1972} 
was proposed to solve those equations.  The optimal test-channel $ \mathsf{P}_{\rv{W}|\rv{X}} $ was characterized using a variation principle in \cite{Tishby1999}. A particular case of deterministic mappings from $ \rv{X} $ to $ \rv{W} $ was considered in \cite{Slonim2002}, and algorithms that find those mappings were described. Unfortunately, since the underlying optimization problem in \eqref{eq:ssib_problem_definition} is not convex, there are no theoretical guarantees for  convergence of the proposed iterative algorithms.


There are two cases for which the solution of  \eqref{eq:ssib_problem_definition} is thoroughly characterized. The first one, considered in \cite{Zaidi2020}, is where the pair $ (\rv{X},\rv{Y})  $ is a \acrlong{dsbs} (\acrshort{dsbs}) with transition probability $ p $. It was shown that the optimal test channel $ \mathsf{P}_{\rv{W}|\rv{X}} $  is a \acrshort{bsc} with transition probability $ h_2^{-1}(1-C) $ where $ h_2(\cdot) $ is binary entropy function and $ h_2^{-1}(\cdot) $ its inverse. The converse can be established by applying \acrlong{mgl}  \cite{Wyner1973}. This setting was also solved as an example in \cite[Section IV.A]{Witsenhausen1975}. The optimality of \acrshort{bsc} test-channel extends  also to a \acrfull{bms} channel \cite[Ch. 4]{Richardson2008}  from $ \rv{X} $ to $ \rv{Y} $, as \cite[Theorem 2]{Sutskover2005} implies.

The second case, first considered in \cite{Chechik2005},  is where $ (\rv{X},\rv{Y}) $ are jointly Gaussian. It was shown that the optimal distribution of $ (\rv{Y},\rv{X},\rv{W}) $ is also jointly Gaussian. The optimality of  the Gaussian test-channel can be proved using conditional \acrlong{epi} \cite[Ch. 2]{Gamal2011}. It can also be established using I-MMSE and Single Crossing Property \cite{Guo2013}. Moreover, under the I-MMSE framework, the proof can be easily extended to Jointly Gaussian Random Vectors $ (\bv{X},\bv{Y}) $  \cite{Bustin2013}.

The \acrshort{ib} method can also be seen as a variation on some  closely related problems in the Information Theory literature. A bound on the conditional entropy for a pair of discrete random variables subject to entropy  constraint has been considered in \cite{Witsenhausen1975} as a method to characterize common information \cite{Gacs1973}. A method  based on convex analysis was proposed to find the achieving distributions and several important examples were given. We will show that the problem addressed in \cite{Witsenhausen1975} is equivalent to \eqref{eq:ssib_problem_definition}. 
The problem of \acrfull{cr} \cite{Steinberg2009} is a different type of source coding with  side-information, a.k.a. Wyner-Ziv coding \cite{Wyner1973}. In \cite{Steinberg2009} the distortion was measured with a log-loss merit, and the encoder is required to perfectly reconstruct decoder's sequence. It can be shown that for the \acrshort{cr}, the resulting single-letter rate-distortion region is equivalent to \acrshort{ib}. 
The problem of  Information Combining \cite{Land2006} was analyzed in the context of  check nodes in LDPC decoding. Two extremes were considered in form of maximization and minimization of mutual information for the binary $ \rv{X} $ setting \cite{Sutskover2005}. It can be shown that the first extreme is equivalent to \acrshort{pf}, while the second recovers the \acrshort{ib} setting.
A recent comprehensive tutorial on the \acrshort{ib} method and related problems is given in \cite{Zaidi2020}. Applications of  \acrshort{ib} methods in Machine Learning are detailed in
\cite{Goldfeld2020}.
Furthermore, the \acrshort{ib} methodology connects to many timely aspects, such as Capital Investment \cite{Erkip1998}, Distributed Learning \cite{Farajiparvar2018}, Deep Learning 
\cite{Amjad2020},
and Convolutional Neural Networks \cite{Yu2020}. 

\begin{figure}[b]
	\centering
	\tikzstyle{block} = [draw, rectangle, minimum width = 0.5cm, minimum height = 0.5cm]
\tikzstyle{sum} = [draw,circle]
\tikzstyle{input} = [coordinate]
\tikzstyle{dummy} = [coordinate]
\tikzstyle{output} = [coordinate]
\tikzstyle{amp} = [draw,shape border rotate = 180,regular polygon,regular polygon sides=3]

\begin{tikzpicture}[thick,scale=0.8, every node/.style={scale=0.8}]
	
	\node [input]	(out1)	at 	(0,0) 		{};
	\node [block]	(source) at 	(3,0)		{\begin{tabular}{c} Bivariate \\ Source \\ $ P_{\rv{X}\rv{Y}} $ \end{tabular}};
	\node [block]	(enc2)	at	(7,0)		{\begin{tabular}{c} Stochastic \\ Encoder \\ $ P_{\rv{W}|\rv{X}} $ \end{tabular}};
	\node [output]	(out2)	at	(10,0)		{};

	\coordinate (dummy1y) at (5.5,-1.25);
	\coordinate (dummy2y) at (8.5,-1.25);
	\node (Iyv) at (7,-1.25) {$ I(\rv{X};\rv{W}) = C $};
	
	\coordinate (dummy1uv) at (0.75,-1.75);
	\coordinate (dummy2uv) at (9.25,-1.75);
	\node (Iuv) at (5,-1.75) {$ \min I(\rv{Y};\rv{W})  $};
	
	\draw [<-] (out1) node[left] {$\rv{Y}$} --    (source);
	\draw [->] (source) --  node[above] {$\rv{X}$} (enc2)  ;
	\draw [->] (enc2) --  (out2) node[right] {$ \rv{W} $};
	
	\draw[->] (Iyv) -- (dummy1y);
	\draw[->] (Iyv) -- (dummy2y);
	\draw[dashed] (dummy1y) -- (dummy1y|-enc2);
	\draw[dashed] (dummy2y) -- (dummy2y|-enc2);
	
	\draw[->] (Iuv) -- (dummy1uv);
	\draw[->] (Iuv) -- (dummy2uv);
	\draw[dashed] (dummy1uv) -- (dummy1uv|-enc2);
	\draw[dashed] (dummy2uv) -- (dummy2uv|-enc2);
\end{tikzpicture}
	\caption{Block diagram of Privacy Funnel.}
	\label{figure:privacy_funnel_block_diagram}
\end{figure}
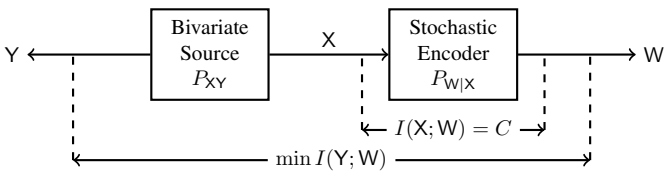

The \acrshort{pf}, which was first introduced in \cite{PinCalmon2017}, is a dual problem to the \acrshort{ib} method. In contrast to \acrshort{ib} problem, the goal in \acrshort{pf}, illustrated in \autoref{figure:privacy_funnel_block_diagram}, is to minimize $ I(\rv{Y};\rv{W}) $ over all test-channels $ P_{\rv{W}|\rv{X}} $ subject to $ I(\rv{X};\rv{W}) = C $. To be more formal, the \acrshort{pf} function, $ \reflectbox{R}:  [0,H(\rv{X})] \to \mathbb{R}_+$ is defined as
\begin{equation} \label{eq:sspf_problem_definition}
	\begin{aligned}
		& \reflectbox{R}_{\scaleto{\mathsf{P}_{\rv{X}\rv{Y}}}{4pt}}(C) \triangleq 
		&&  \underset{\mathsf{P}_{\rv{W}|\rv{X}}}{\text{minimize}}
		&&  I(\rv{W};\rv{Y}) \\
		&
		& & \text{subject to}
		& & I(\rv{X};\rv{\rv{W}}) \mkern-3mu = \mkern-3mu  C.
	\end{aligned}
\end{equation}
Note that taking the constraint here with equality is essential since inequality constraint (i.e. $ I(\rv{X};\rv{W}) \leq C $)  will induce a trivial solution, i.e. taking $ \rv{X} $ and $ \rv{W} $ independent.

\acrshort{pf} is directly connected to Information Combining \cite{Sutskover2005,Land2006}. For example, if the channel from $ \rv{X} $ to $ \rv{Y} $ is a \acrshort{bms}, then by \cite{Sutskover2005}, $ P_{\rv{W}|\rv{X}} $ is a \acrfull{bec}. 
A rather intriguing for the Gaussian setting, where $ (\rv{X},\rv{Y}) $ are jointly Gaussian, the result is zero, since one can use the channel from $ \rv{X} $ to $ \rv{W} $ to describe the less significant bits of $ \rv{X} $ \cite{Shamai2021}.
Furthermore, the additive noise Helper problem studied in \cite{Bross2019}, is  directly linked to the \acrshort{pf}. By reformulating the former as an information combining problem, the solution follows directly as was shown in \cite{Shamai2021}.

In this work we address the input symmetric nonbinary setting for the \acrshort{ib} and \acrshort{pf} functions. We will find conditions on the bivariate source $ (\rv{X},\rv{Y}) $ for which the stochastic encoder from $ \rv{X} $ to $ \rv{W} $ can be completely characterized, thus extending the binary examples from \cite{Witsenhausen1975,Sutskover2005} and \cite{Zaidi2020}.


\section{Notations and Basic Properties}
We denote by $ \Delta_n $ the $ n $ dimensional probability simplex, $ \bv{q} \in \simplex{n} $ the marginal probability vector of $ \rv{X} $, and $ T $ the transition matrix from $ \rv{X} $ to $ \rv{Y} $, i.e.,
\begin{equation}
	T_{ij} \triangleq \Prob{\rv{Y}=i|\rv{X}=j}, \qquad 1\leq i \leq m, 1\leq j \leq n.
\end{equation}

We further rewrite \eqref{eq:ssib_problem_definition} with explicit dependence on $ \bv{q} $ and $ \mat{T} $ as $ R_T(\bv{q},C) = R(C)  = R_{\mathsf{P}_{\rv{X}\rv{Y}}} (C)$.
The entropy of an $ n $-ary probability vector $ \bv{p} \in \Delta_n $ is denoted by $ h_n(\bv{p}) $.

The following tight cardinality bound  was established in \cite{Asoodeh2020}. It was actually already proved for the corresponding dual problem, namely the  \acrshort{ib} Lagrangian, in \cite{Harremoes2007}. But since $ R_T(\bv{q},C) $ is generally not a strictly convex function of $ C $, the result in \cite{Harremoes2007} cannot be directly applied for our problem \eqref{eq:ssib_problem_definition}.
\begin{lemma}[{\cite[Th. 9]{Asoodeh2020}}] \label{lemma:ssib_cardinality}
	The optimization over $ \rv{W} $ in \eqref{eq:ssib_problem_definition} can be restricted to $ | \mathcal{W} | \leq n $.
\end{lemma}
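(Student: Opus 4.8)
The plan is to use the standard support-lemma (Carathéodory-type) argument applied to the mutual-information functional, exactly in the spirit of the cardinality bounds for rate-distortion and broadcast-channel auxiliary variables. The key observation is that once we condition on $\rv{W}=w$, the conditional distribution $\mathsf{P}_{\rv{X}|\rv{W}=w}$ is a point $\bv{p}_w$ in the simplex $\simplex{n}$, and both the objective $I(\rv{Y};\rv{W})$ and the constraint $I(\rv{X};\rv{W})$ can be written as expectations over $\rv{W}$ of functions of $\bv{p}_w$ (plus terms depending only on the fixed marginal $\bv{q}$). So we are trying to represent a fixed average by a mixture supported on as few points as possible while preserving finitely many linear statistics.

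**Key steps.** First I would fix an optimal (or near-optimal) $\mathsf{P}_{\rv{W}|\rv{X}}$ and write $\rv{W}$ with arbitrary alphabet, and note that the marginal $\bv{q}=\mathsf{P}_{\rv{X}}=\sum_w \mathsf{P}_{\rv{W}}(w)\,\bv{p}_w$ is a convex combination of the posteriors $\bv{p}_w\in\simplex{n}$. Second, I would introduce the real-valued functionals on $\simplex{n}$: the map $\bv{p}\mapsto H(\rv{Y})$-component, namely $g(\bv{p}) = h_m(T\bv{p})$ (the entropy of the $\rv{Y}$-distribution induced by $\rv{X}\sim\bv{p}$ through $T$), and the conditional-entropy-of-$\rv{X}$ functional implicit in $I(\rv{X};\rv{W})$. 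Concretely, using $I(\rv{X};\rv{W}) = H(\rv{X}) - \sum_w \mathsf{P}_{\rv{W}}(w) h_n(\bv{p}_w)$ and $I(\rv{Y};\rv{W}) = h_m(T\bv{q}) - \sum_w \mathsf{P}_{\rv{W}}(w) h_m(T\bv{p}_w)$, I would set up the following $n$ constraints to be preserved: the $n-1$ free coordinates of the barycenter identity $\sum_w \mathsf{P}_{\rv{W}}(w)\bv{p}_w = \bv{q}$, plus one more linear statistic, $\sum_w \mathsf{P}_{\rv{W}}(w) h_n(\bv{p}_w)$ (which pins down $I(\rv{X};\rv{W})$, hence keeps the constraint satisfied). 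Third, I would invoke the support lemma (Eggleston/Carathéodory): the measure $\mathsf{P}_{\rv{W}}$ on $\simplex{n}$ can be replaced by one supported on at most $(n-1)+1 = n$ points while preserving these $n$ statistics and also the value of the remaining continuous functional $\sum_w \mathsf{P}_{\rv{W}}(w) h_m(T\bv{p}_w)$, which is exactly the term controlling the objective $I(\rv{Y};\rv{W})$. Hence the new $\rv{W}$ has $|\mathcal{W}|\le n$, still satisfies the Markov chain (the conditional law of $\rv{X}$ given each new $w$ is one of the retained $\bv{p}_w$, and $\rv{Y}-\rv{X}-\rv{W}$ holds through $T$), meets the constraint with the same value, and achieves the same objective.

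**Main obstacle.** The delicate point is the bookkeeping on the number of constraints: one must be careful to preserve $I(\rv{X};\rv{W})$ exactly (not merely $\le C$), since in the coupled IB/PF formulation the optimal point may sit strictly inside the constraint region for IB but the dual/PF version requires equality, and in any case \autoref{lemma:ssib_cardinality} is stated for \eqref{eq:ssib_problem_definition}. Counting $n-1$ simplex coordinates plus the single entropy functional $h_n(\bv{p}_w)$ gives exactly $n$, matching the claim; using $\le$ instead of $=$ on the complexity would naively suggest $n-1$ suffices, but then monotonicity/continuity of $R_T(\bv{q},\cdot)$ must be invoked to argue the objective is not hurt — this is the subtlety the authors flag regarding \cite{Harremoes2007} versus \cite{Asoodeh2020}, namely that non–strict-convexity of $C\mapsto R_T(\bv{q},C)$ blocks the naive Lagrangian-support argument, so the clean route is to fix the complexity exactly and carry $n$ linear constraints. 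The rest is the routine verification that the support lemma's hypotheses (compactness of $\simplex{n}$, continuity of $\bv{p}\mapsto h_n(\bv{p})$ and $\bv{p}\mapsto h_m(T\bv{p})$) hold, which they do.
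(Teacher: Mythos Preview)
The paper does not supply its own proof of this lemma; it is quoted from \cite[Th.~9]{Asoodeh2020}, and the surrounding text only explains why the Lagrangian argument of \cite{Harremoes2007} does not transfer directly. So there is no in-paper argument to compare against.

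On the merits, your outline has a counting gap. You list $n$ equality statistics to preserve---the $n-1$ free coordinates of $\bv{q}$ together with $\sum_w \mathsf{P}_{\rv{W}}(w)\,h_n(\bv{p}_w)$---and then assert that the support lemma returns a mixture on at most $n$ atoms that \emph{also} preserves the objective $\sum_w \mathsf{P}_{\rv{W}}(w)\,h_m(T\bv{p}_w)$. That is one functional too many: Fenchel--Eggleston--Carath\'eodory applied to the connected image of $\bv{p}\mapsto(p_1,\dots,p_{n-1},h_n(\bv{p}),h_m(T\bv{p}))$ in $\mathbb{R}^{n+1}$ yields at most $n+1$ atoms, not $n$. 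Equivalently, the support lemma matches $d$ expectations with $d$ atoms, and here you need $d=n+1$.

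The reduction from $n+1$ to $n$ is exactly the subtlety the authors flag. For the Lagrangian $\phi(\bv{p},\lambda)=h_m(T\bv{p})-\lambda h_n(\bv{p})$ one only needs the lower convex envelope over $\simplex{n}$ (affine dimension $n-1$), and Carath\'eodory on the boundary of a body in $\mathbb{R}^n$ does give $n$ atoms---this is the result of \cite{Harremoes2007}. But when $C\mapsto R_T(\bv{q},C)$ has an affine stretch, no Lagrangian optimizer need land at the prescribed complexity $C$; your ``clean route'' of pinning $I(\rv{X};\rv{W})=C$ as an extra equality simply restores that constraint to the Carath\'eodory count and returns you to $n+1$. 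Closing the last unit requires an additional optimality-based reduction (e.g.\ a Dubins/extreme-point argument on the supporting face, as carried out in \cite{Asoodeh2020}), which your sketch does not supply.
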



As we have already mentioned, the \acrshort{ib} function defined in \eqref{eq:ssib_problem_definition} is closely related to the \acrfull{ceb} problem studied in \cite{Witsenhausen1975}, which is given by
\begin{equation} \label{eq:ww1975_problem_definition}
	\begin{aligned}
		& F_T(\bv{q},x)  \triangleq  
		& & \underset{\rv{W} \rightarrow \rv{X} \rightarrow \rv{Y}}{\text{minimize}}
		& &  H(\rv{Y}|\rv{W}) \\
		& 
		& & \text{subject to}
		& & H(\rv{X}|\rv{\rv{W}}) \geq x.
	\end{aligned}
\end{equation}
%

\begin{remark}
	Note that originally in \cite{Witsenhausen1975} the conditional entropy constraint was given with equality, and equivalence to the inequality setting was established in \cite[Theorem 2.5]{Witsenhausen1975}
\end{remark}
It turns out that the aforementioned problem is closely connected to the \acrshort{ib} function.
\begin{proposition} \label{proposition:equivalence_of_ssib_and_ww1975}
	The \acrshort{ib} function  defined in \eqref{eq:ssib_problem_definition} 
	is equivalent to the \acrshort{ceb} function defined in \eqref{eq:ww1975_problem_definition}.
\end{proposition}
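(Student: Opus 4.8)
The plan is to show that the two problems coincide up to an affine re\-parametrisation that leaves the optimising test channel unchanged, so that ``equivalence'' here means an explicit identity relating $R_T$ and $F_T$ together with a bijection between their optimal solutions. First I would observe that the Markov chain $\rv{Y} \rightarrow \rv{X} \rightarrow \rv{W}$ holds if and only if $\rv{W} \rightarrow \rv{X} \rightarrow \rv{Y}$ holds, since a Markov chain may be read in either direction; hence, for a fixed $\mathsf{P}_{\rv{X}\rv{Y}}$, both \eqref{eq:ssib_problem_definition} and \eqref{eq:ww1975_problem_definition} optimise over exactly the same family of joint laws $\mathsf{P}_{\rv{X}\rv{Y}\rv{W}} = \mathsf{P}_{\rv{X}\rv{Y}}\,\mathsf{P}_{\rv{W}|\rv{X}}$, parametrised by the stochastic encoder $\mathsf{P}_{\rv{W}|\rv{X}}$.

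Next I would use that, with $\mathsf{P}_{\rv{X}\rv{Y}}$ (equivalently $\bv{q}$ and $\mat{T}$) fixed, the marginal entropies $H(\rv{X}) = h_n(\bv{q})$ and $H(\rv{Y}) = h_m(\mat{T}\bv{q})$ do not depend on the choice of $\mathsf{P}_{\rv{W}|\rv{X}}$. Writing $I(\rv{X};\rv{W}) = H(\rv{X}) - H(\rv{X}|\rv{W})$, the complexity constraint $I(\rv{X};\rv{W}) \leq C$ is identical to $H(\rv{X}|\rv{W}) \geq h_n(\bv{q}) - C$, i.e.\ to the entropy constraint in \eqref{eq:ww1975_problem_definition} with $x = h_n(\bv{q}) - C$; as $C$ sweeps $[0,h_n(\bv{q})]$ the value $x$ sweeps the same interval in reverse. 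Similarly $I(\rv{Y};\rv{W}) = h_m(\mat{T}\bv{q}) - H(\rv{Y}|\rv{W})$, so maximising $I(\rv{Y};\rv{W})$ over the feasible set is the same as minimising $H(\rv{Y}|\rv{W})$ over it, and the two optimisers coincide. Combining these observations yields the precise relation
\begin{equation*}
	R_T(\bv{q},C) = h_m(\mat{T}\bv{q}) - F_T\!\left(\bv{q},\, h_n(\bv{q}) - C\right),
\end{equation*}
together with the statement that a test channel $\mathsf{P}_{\rv{W}|\rv{X}}$ is optimal for one problem precisely when it is optimal for the other.

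The only point requiring a little care is that \eqref{eq:ssib_problem_definition} carries its constraint as an inequality while \eqref{eq:ww1975_problem_definition}, as originally posed in \cite{Witsenhausen1975}, used equality; this gap is exactly the one bridged by \cite[Theorem 2.5]{Witsenhausen1975} recalled in the Remark above, or one may instead note that the optimal value is monotone in the available complexity, so the inequality constraint is active at an optimum and can be replaced by equality. I do not expect any genuine obstacle: the content is the bookkeeping identity $I(\,\cdot\,;\rv{W}) = H(\,\cdot\,) - H(\,\cdot\,|\rv{W})$ applied on both the $\rv{X}$ and the $\rv{Y}$ coordinate, together with the reversibility of the Markov chain. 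Note also that \autoref{lemma:ssib_cardinality} is not needed for the equivalence itself, only to ensure the extrema are attained.
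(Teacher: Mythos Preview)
Your proposal is correct and follows exactly the paper's own proof: the paper simply observes that with $\mathsf{P}_{\rv{X}\rv{Y}}$ fixed, $I(\rv{X};\rv{W})\leq C$ is equivalent to $H(\rv{X}|\rv{W})\geq h_n(\bv{q})-C$ and $I(\rv{Y};\rv{W})=H(\rv{Y})-H(\rv{Y}|\rv{W})$, and concludes with the identity $R_T(\bv{q},C)=h_m(\mat{T}\bv{q})-F_T(\bv{q},h_n(\bv{q})-C)$. Your additional remarks on Markov chain reversibility and the equality-versus-inequality constraint are accurate elaborations, but the core argument is identical.
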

The proof is postponed to Appendix \ref{section:proof_equivalence_of_ssib_and_ww1975}.

The latter result implies that we can utilize the properties of $ F_T(\bv{q},x) $ developed in \cite{Witsenhausen1975} for our problem in a straightforward manner, an aspect that we will heavily rely on in \autoref{section:symmetric_ib}.


In a very similar manner to \autoref{proposition:equivalence_of_ssib_and_ww1975}, we can redefine the Privacy Funnel problem defined in \eqref{eq:sspf_problem_definition} as follows.
\begin{equation} \label{eq:sspf_problem_definition2}
	\begin{aligned}
		& \reflectbox{F}_{\scaleto{\mat{T}}{4pt}}(\bv{q},x) \triangleq 
		&&  \underset{\mathsf{P}_{\rv{W}|\rv{X}}}{\text{maximize}}
		&&  H(\rv{Y}|\rv{W}) \\
		&
		& & \text{subject to}
		& & H(\rv{\rv{X}}|\rv{W}) \mkern-3mu = \mkern-3mu  x.
	\end{aligned}
\end{equation}

We have the following characterization of $ \reflectbox{F}_{\scaleto{\mat{T}}{4pt}}(\bv{q},x)  $.
\begin{theorem}
	The function $ \reflectbox{F}_{\scaleto{\mat{T}}{4pt}}(\bv{q},\cdot)  $ is concave on the compact convex domain $ \{x: 0 \leq x \leq h_n(\bv{q})\} $ and for each $ (\bv{q},x) $, the maximum is attained with $ \rv{W} $ taking at most $ n+1 $ values.
\end{theorem}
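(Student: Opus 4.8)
The plan is to recast \eqref{eq:sspf_problem_definition2} as the computation of the \emph{upper} boundary of a convex hull, following the convex-analytic method of \cite{Witsenhausen1975} but with the roles of minimum and maximum (equivalently, of lower and upper envelope, of convex and concave) interchanged throughout.

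First I would parametrize test channels by posteriors. Any $\mathsf{P}_{\rv{W}|\rv{X}}$ induces, for each realization $w$, a posterior $\bv{q}_w \triangleq \mathsf{P}_{\rv{X}|\rv{W}}(\cdot|w) \in \simplex{n}$ with weight $p_w \triangleq \mathsf{P}_{\rv{W}}(w)$, and conversely any finitely supported family $\{(p_w,\bv{q}_w)\}$ with $p_w\ge 0$, $\sum_w p_w=1$ and $\sum_w p_w \bv{q}_w = \bv{q}$ defines a valid test channel by Bayes' rule. Under the Markov chain $\rv{Y}\to\rv{X}\to\rv{W}$ one has $\mathsf{P}_{\rv{Y}|\rv{W}}(\cdot|w) = \mat{T}\bv{q}_w$, hence
\begin{equation}
  H(\rv{X}|\rv{W}) = \sum_w p_w\, h_n(\bv{q}_w), \qquad H(\rv{Y}|\rv{W}) = \sum_w p_w\, h_m(\mat{T}\bv{q}_w),
\end{equation}
so that \eqref{eq:sspf_problem_definition2} asks to maximize $\sum_w p_w\, h_m(\mat{T}\bv{q}_w)$ over all such splittings of $\bv{q}$ subject to $\sum_w p_w\, h_n(\bv{q}_w)=x$.

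Next I would introduce the continuous map $\bm{\phi}\colon \simplex{n}\to\mathbb{R}^{n+2}$, $\bm{\phi}(\bv{p}) = \big(\bv{p},\,h_n(\bv{p}),\,h_m(\mat{T}\bv{p})\big)$; since $\simplex{n}$ is $(n-1)$-dimensional, $\bm{\phi}(\simplex{n})$ lies in an affine subspace of dimension $(n-1)+1+1=n+1$, and it is compact and connected, so $\mathcal{C}\triangleq\operatorname{conv}\bm{\phi}(\simplex{n})$ is compact and convex. The reformulation above shows that $\reflectbox{F}_{\mat{T}}(\bv{q},x)=\sup\{\,t:(\bv{q},x,t)\in\mathcal{C}\,\}$: every feasible $\rv{W}$ yields a point of $\mathcal{C}$ whose third coordinate is $H(\rv{Y}|\rv{W})$, and conversely every point of $\mathcal{C}$ is, by Carathéodory's theorem, a finite convex combination of points of $\bm{\phi}(\simplex{n})$, hence realized by a finitely supported $\rv{W}$; compactness of $\mathcal{C}$ makes the supremum an attained maximum. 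Concavity is then automatic: the map $(\bv{q},x)\mapsto\sup\{t:(\bv{q},x,t)\in\mathcal{C}\}$ is the upper boundary of a convex set, hence jointly concave on its effective domain (the projection of $\mathcal{C}$, which is convex), and in particular concave in $x$ for fixed $\bv{q}$. That this effective domain is exactly $\{x:0\le x\le h_n(\bv{q})\}$ follows from $0\le H(\rv{X}|\rv{W})\le H(\rv{X})=h_n(\bv{q})$ (the upper bound is also the concavity inequality $\sum_w p_w h_n(\bv{q}_w)\le h_n(\bv{q})$) together with a time-sharing argument interpolating between a test channel that reveals $\rv{X}$ (giving $x=0$) and the constant test channel (giving $x=h_n(\bv{q})$), which realizes every intermediate value.

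Finally, for the cardinality bound I would invoke the Fenchel--Eggleston strengthening of Carathéodory's theorem: since $\bm{\phi}(\simplex{n})$ is connected and spans an affine space of dimension $n+1$, every point of $\mathcal{C}$ — in particular the maximizer $\big(\bv{q},x,\reflectbox{F}_{\mat{T}}(\bv{q},x)\big)$ — is a convex combination of at most $n+1$ points of $\bm{\phi}(\simplex{n})$, i.e. is achieved by some $\rv{W}$ with $|\mathcal{W}|\le n+1$. The step demanding the most care is the identification of \eqref{eq:sspf_problem_definition2} with the convex-hull quantity: one must check that the \emph{equality} constraint $H(\rv{X}|\rv{W})=x$ (not merely $\le x$) is honored and that restricting from arbitrary, possibly infinitely supported, test channels to finite mixtures costs nothing; both are delivered by the compactness of $\mathcal{C}$ and the Carathéodory representation. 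A sharper argument exploiting a supporting hyperplane at the optimum, as in \cite{Witsenhausen1975}, could in principle reduce the bound to $n$, but $n+1$ is all that is claimed here.
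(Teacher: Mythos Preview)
Your proposal is correct and is precisely the approach the paper has in mind: the paper omits the proof, stating only that it ``is similar to \cite[Theorem~2.3]{Witsenhausen1975},'' and in the proof of \autoref{theorem:sspf_modulo_additive} it spells out the same convex-hull construction $\set{S}=\{(\bv{p},h_n(\bv{p}),h_m(T\bv{p})):\bv{p}\in\simplex{n}\}$ and $\set{C}=\operatorname{conv}\set{S}$ that you use. Your parametrization by posteriors, identification of $\reflectbox{F}_T(\bv{q},x)$ with the upper boundary of $\set{C}$, concavity from convexity of $\set{C}$, and the Fenchel--Eggleston cardinality bound all match the Witsenhausen template; the only quibble is your closing aside that a supporting-hyperplane argument ``could in principle reduce the bound to $n$'' --- in \cite{Witsenhausen1975} that argument also yields $n+1$, and the sharper bound $n$ in \autoref{lemma:ssib_cardinality} is specific to the \acrshort{ib} (minimization) side.
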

The proof of this theorem is similar to \cite[Theorem 2.3]{Witsenhausen1975} and is omitted here due to space limitations.

\section{The Symmetric Information Bottleneck} \label{section:symmetric_ib}
In this section we will give a characterization of the achieving conditional distributions and the value of the problem defined in \eqref{eq:ssib_problem_definition} for specific class of  input symmetric channels. 
We begin with the definitions of symmetric group of permutation, symmetry group of  stochastic matrix and input symmetric channel \cite{Witsenhausen1975}.

\begin{definition} \label{definition:symmetric_group}
	Let $ \mathscr{S}_{n} $ denote the representation of the symmetric group of permutation of $ n $ objects by the $ n\times n $ permutation matrices. Let $ \mathscr{S} _n \times  \mathscr{S}_m $ be the representation of the direct product group by the pairs $ (\mat{G},\mat{\Pi}) $, $ G \in \mathscr{S}_n $; $ \Pi \in \mathscr{S}_m $ with the composition $ (\mat{G}_1,\mat{\Pi}_1) (\mat{G}_2,\mat{\Pi}_2) = (\mat{G}_1 \mat{G}_2 ,\mat{\Pi}_1 \mat{\Pi}_2) $.
	
	For an $ m \times n $ stochastic matrix $ T $, (an $ n $ input, $ m $ output channel), let $ \mathscr{G} $ be the set $ \{(\mat{G},\mat{\Pi}) \in \mathscr{S}_n \times \mathscr{S}_m | TG = \Pi T \} $, and let $ \mathscr{G}_i $ ($ \mathscr{G}_o $) be the projections of $ \mathscr{G} $ on the first (second) factor.
	If $ \mat{T} \mat{G}_1 = \mat{\Pi}_1 \mat{T} $, $ \mat{T}G_2  = \mat{\Pi}_2 T$, then $ \mat{T} \mat{G}_1 \mat{G}_2 = \mat{\Pi}_1 \mat{\Pi}_2 T $ which shows that $ \mathscr{G} $, $\mathscr{G} _i$ $, \mathscr{G}_o $ are subgroups of the finite groups $ \mathscr{S}_n \times \mathscr{S}_m $, $ \mathscr{S}_n $, $ \mathscr{S}_m $ respectively. $ \mathscr{G} $ is the symmetry group of $ \mat{T} $, $ \mathscr{G}_i  $ ($ \mathscr{G}_0 $) is the input (output) symmetry group.
	
	The channel defined by $ T $ will be called input (output) symmetric if $ \mathscr{G}_i  $ ($ \mathscr{G}_o $) is transitive (a subgroup of $ \mathscr{S}_n $ is transitive if each element of $ \{1,\dots,n\} $ can be mapped to every other element of $ \{1,\dots,n\} $ by some member of the subgroup). $ T $ is said to be symmetric if both $ \mathscr{G}_i $ and $ \mathscr{G}_o $ are transitive.
\end{definition}

We also define the set of  $ (\bv{q},C) $ for which we will have a complete characterization of the achieving distributions.
\begin{definition}
	Let $ \phi(\bv{p},\lambda) \triangleq h_m(T\bv{p}) - \lambda h_n(\bv{p}) $ and $ \bv{p}^* = \argmin_{\bv{p} \in \simplex{n}} \phi(\bv{p},\lambda) $. We define the following set for any $ \lambda \in [0,1] $ and $ \{ G_\alpha \}_{\alpha=1}^n \in \group{S}_n $:
	\begin{equation} \label{eq:definition_of_the_set_q_C}
		\set{Q} \triangleq \left\{ \mkern-2mu
		(\bv{q},C ) : 
		\bv{q} \mkern-4mu = \mkern-4mu \mkern-2mu \sum_{\alpha=1}^n \mkern-2mu w_a \mat{G}_{\alpha} \bv{p}^*,  \bv{w}  \mkern-4mu \in  \mkern-4mu \simplex{n}, C \mkern-4mu = \mkern-4mu 1 \mkern-4mu - \mkern-4mu h_n(\bv{p}^{\star})  \mkern-4mu
		\right\} \mkern-4mu.
	\end{equation}
\end{definition}

Equipped with this definition we are ready to state our main theorem here.
\begin{theorem} \label{theorem:input_symmetric_transition_matrices}
	Assume that $ T $ is input symmetric stochastic matrix with input symmetry group $ \group{G}_i $ of order $ n $. Then for every $ (\bv{q},C) \in \set{Q} $ defined in \eqref{eq:definition_of_the_set_q_C}, the optimal test-channel from $ \rv{W} $ to $ \rv{X} $ is a modulo-additive channel. 
\end{theorem}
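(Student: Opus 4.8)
The proof runs through the equivalent \acrshort{ceb} formulation of \autoref{proposition:equivalence_of_ssib_and_ww1975}: it suffices to show that, at the \acrshort{ceb} operating level $x = h_n(\bv{p}^*)$ corresponding to $(\bv{q},C)\in\set{Q}$, the minimiser of $F_T(\bv{q},x)$ can be taken to be a modulo-additive channel. First I would establish a Lagrangian (supporting-hyperplane) lower bound. Any feasible $\rv{W}\rightarrow\rv{X}\rightarrow\rv{Y}$ is described by its posteriors $\bv{p}_w \triangleq \mathsf{P}_{\rv{X}|\rv{W}}(\cdot|w)$, with $\bv{q}=\sum_w \mu_w \bv{p}_w$, $H(\rv{X}|\rv{W})=\sum_w\mu_w h_n(\bv{p}_w)\ge x$ and $H(\rv{Y}|\rv{W})=\sum_w\mu_w h_m(T\bv{p}_w)$. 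For every $\lambda\ge 0$,
\[
 H(\rv{Y}|\rv{W}) \;=\; \sum_w \mu_w\bigl(h_m(T\bv{p}_w)-\lambda h_n(\bv{p}_w)\bigr) + \lambda\sum_w\mu_w h_n(\bv{p}_w) \;\ge\; \min_{\bv{p}\in\simplex{n}}\phi(\bv{p},\lambda) + \lambda x ,
\]
so $F_T(\bv{q},x)\ge \phi(\bv{p}^*,\lambda)+\lambda x$; this is precisely the lower-envelope description of $F_T$ from \cite{Witsenhausen1975}, and the relevant slopes lie in $[0,1]$ by the data-processing inequality. Moreover, if $(\mat{G},\mat{\Pi})\in\group{G}$ then $h_m(T\mat{G}\bv{p})=h_m(\mat{\Pi}T\bv{p})=h_m(T\bv{p})$ and $h_n(\mat{G}\bv{p})=h_n(\bv{p})$, hence $\phi(\cdot,\lambda)$ is $\group{G}_i$-invariant and the whole orbit $\{\mat{G}_\alpha\bv{p}^*\}_{\alpha=1}^n$ consists of minimisers of $\phi(\cdot,\lambda)$.

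Next I would exhibit a channel meeting this bound. Given $(\bv{q},C)\in\set{Q}$, write $\bv{q}=\sum_{\alpha=1}^n w_\alpha \mat{G}_\alpha\bv{p}^*$ with $\mat{G}_\alpha\in\group{G}_i$, let $\rv{W}$ take values in $\{1,\dots,n\}$ with $\Prob{\rv{W}=\alpha}=w_\alpha$, and set $\mathsf{P}_{\rv{X}|\rv{W}}(\cdot|\alpha)=\mat{G}_\alpha\bv{p}^*$. Then $\rv{X}$ has marginal $\sum_\alpha w_\alpha\mat{G}_\alpha\bv{p}^*=\bv{q}$; by permutation-invariance of entropy, $H(\rv{X}|\rv{W})=\sum_\alpha w_\alpha h_n(\mat{G}_\alpha\bv{p}^*)=h_n(\bv{p}^*)=x$; and since $T\mat{G}_\alpha=\mat{\Pi}_\alpha T$,
\[
 H(\rv{Y}|\rv{W}) \;=\; \sum_\alpha w_\alpha h_m(T\mat{G}_\alpha\bv{p}^*) \;=\; \sum_\alpha w_\alpha h_m(\mat{\Pi}_\alpha T\bv{p}^*) \;=\; h_m(T\bv{p}^*) \;=\; \phi(\bv{p}^*,\lambda)+\lambda h_n(\bv{p}^*) ,
\]
which equals the lower bound with the same $\lambda$ (the one for which $\bv{p}^*$ minimises $\phi(\cdot,\lambda)$). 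Hence this $\rv{W}$ is optimal, $|\mathcal{W}|=n$ in accordance with \autoref{lemma:ssib_cardinality}, and the optimal posteriors are the orbit elements $\mat{G}_\alpha\bv{p}^*$.

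It then remains to recognise $\mathsf{P}_{\rv{X}|\rv{W}}$ as a modulo-additive channel. Since $\group{G}_i$ is transitive on $\{1,\dots,n\}$ and $|\group{G}_i|=n$, the orbit–stabiliser theorem forces all point stabilisers to be trivial, so $\group{G}_i$ acts simply transitively; fixing a reference symbol identifies $\mathcal{X}$ with the group $\group{G}_i$ itself and each $\mat{G}_\alpha$ with left translation by a group element. Under this identification $\mathsf{P}_{\rv{X}|\rv{W}}(x|\alpha)=(\mat{G}_\alpha\bv{p}^*)_x=\bv{p}^*_{\alpha\inv\cdot x}$, i.e.\ $\rv{X}=\rv{W}\cdot\rv{N}$ with $\rv{N}\sim\bv{p}^*$ independent of $\rv{W}$ — an additive-noise channel over $\group{G}_i$, which is a modulo-additive channel (in particular so when $\group{G}_i$ is cyclic). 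This proves the theorem.

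The main obstacle is making the converse/tightness step watertight: one must ensure that the bound $F_T(\bv{q},x)\ge\phi(\bv{p}^*,\lambda)+\lambda x$ is binding at precisely the $\lambda$ used to build $\set{Q}$ — equivalently, that the point $(h_n(\bv{p}^*),h_m(T\bv{p}^*))$ lies on the lower convex envelope of $\{(h_n(\bv{p}),h_m(T\bv{p})):\bv{p}\in\simplex{n}\}$ — so that the orbit-supported decomposition, rather than some non-additive one, is the one that saturates the bound. This is exactly what motivates the definition of $\set{Q}$, and it is where the convex-analytic results of \cite{Witsenhausen1975}, imported through \autoref{proposition:equivalence_of_ssib_and_ww1975}, carry the weight. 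A secondary delicacy is that the invariances $h_m(T\mat{G}_\alpha\bv{p}^*)=h_m(T\bv{p}^*)$ used above rely on the full symmetry relation $T\mat{G}_\alpha=\mat{\Pi}_\alpha T$, not merely on transitivity of $\group{G}_i$ as a permutation group.
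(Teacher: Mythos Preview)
Your proposal is correct and follows essentially the same route as the paper: both pass to the \acrshort{ceb} formulation, establish that $\phi(\cdot,\lambda)$ is $\group{G}_i$-invariant so that the full orbit $\{\mat{G}_\alpha\bv{p}^*\}$ consists of minimisers, and then read off from the Witsenhausen convex-envelope machinery that the optimal posteriors at $(\bv{q},x)=(\sum_\alpha w_\alpha\mat{G}_\alpha\bv{p}^*,\,h_n(\bv{p}^*))$ are exactly these orbit elements. The only differences are expository: you rederive the Lagrangian bound $F_T(\bv{q},x)\ge\min_{\bv{p}}\phi(\bv{p},\lambda)+\lambda x$ by hand, whereas the paper imports it as \autoref{lemma:ww1975_thm4p1}; and you justify the final ``modulo-additive'' identification via the orbit--stabiliser theorem, whereas the paper simply asserts that an $n\times n$ input-symmetric matrix is circulant. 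Your concern in the last paragraph about tightness is unnecessary: since $\bv{p}^*$ is by definition a global minimiser of $\phi(\cdot,\lambda)$, the achieving construction saturates the bound at that same $\lambda$ automatically, which is precisely how the paper (and \cite{Witsenhausen1975}) closes the argument.
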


Note if $ \bv{q} $ is uniform over $ n $, then it always in $ \set{Q} $, as taking $ \bv{w} $ to be uniform over $ n $, we obtain
\begin{equation}
	\bv{q} = \sum_{\alpha=1}^n w_a \mat{G}_{\alpha} \bv{p}^* = \frac{1}{n} \sum_{\alpha=1}^n \mat{G}_{\alpha} \bv{p}^* = \bv{u}_n,
\end{equation}
where $ \bv{u}_n $ is an $ n $-ary uniform probability vector. This fact induces the following corollary.
\begin{corollary}
	Assume that $ T $ is input symmetric stochastic matrix with input symmetry group $ \group{G}_i $ of order $ n $ and $ \rv{X} $ is uniformly distributed over $ n $. Then for every $ C \in [0,\log n] $, the test-channel from $ \rv{W} $ to $ \rv{X} $ is a modulo-additive noise channel and $ \rv{W} $ is uniform over $ n $.
\end{corollary}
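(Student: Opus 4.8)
\emph{Proof sketch.} Since the corollary is immediate from Theorem~\ref{theorem:input_symmetric_transition_matrices} once one checks that $(\bv{u}_n,C)\in\set{Q}$ for every $C\in[0,\log n]$ and that an optimal $\rv{W}$ may be taken uniform, I will first outline a proof of the theorem and then return to the corollary. The plan is to pass to Witsenhausen's conditional-entropy-bound problem and extract the modulo-additive structure from the input symmetry group. By \autoref{proposition:equivalence_of_ssib_and_ww1975} I may equivalently work with the \acrshort{ceb} problem $F_T(\bv{q},x)$ of \eqref{eq:ww1975_problem_definition}. Writing $\bv{p}_w=\mathsf{P}_{\rv{X}|\rv{W}=w}$, an admissible $\rv{W}$ contributes $H(\rv{X}|\rv{W})=\sum_w\Prob{\rv{W}=w}\,h_n(\bv{p}_w)$ and $H(\rv{Y}|\rv{W})=\sum_w\Prob{\rv{W}=w}\,h_m(T\bv{p}_w)$ subject to $\sum_w\Prob{\rv{W}=w}\,\bv{p}_w=\bv{q}$, so minimizing the Lagrangian amounts to evaluating the lower convex envelope of $\phi(\cdot,\lambda)$ over $\simplex{n}$ at $\bv{q}$. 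The key observation is that whenever $\bv{q}$ lies in the convex hull of the \emph{global} minimizers of $\phi(\cdot,\lambda)$, this envelope is flat there, equals $\min_{\bv{p}}\phi(\bv{p},\lambda)$, and is attained by any $\rv{W}$ whose conditional laws $\bv{p}_w$ are global minimizers mixing to $\bv{q}$; the set $\set{Q}$ of \eqref{eq:definition_of_the_set_q_C} is exactly this locus, since there $\bv{p}^\star=\argmin_{\bv{p}\in\simplex{n}}\phi(\bv{p},\lambda)$ and $\bv{q}=\sum_\alpha w_\alpha\mat{G}_\alpha\bv{p}^\star$.

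The second step identifies the global minimizer set of $\phi(\cdot,\lambda)$ with a $\group{G}_i$-orbit. If $(\mat{G},\mat{\Pi})\in\group{G}$ then $h_m(T\mat{G}\bv{p})=h_m(\mat{\Pi}T\bv{p})=h_m(T\bv{p})$ and $h_n(\mat{G}\bv{p})=h_n(\bv{p})$, hence $\phi(\mat{G}\bv{p},\lambda)=\phi(\bv{p},\lambda)$, so $\mat{G}_\alpha\bv{p}^\star$ is again a global minimizer for each of the $n$ elements $\mat{G}_1,\dots,\mat{G}_n$ of $\group{G}_i$. Because $\group{G}_i$ is transitive of order exactly $n$, its action on $\{1,\dots,n\}$ is regular, so $\{\mat{G}_\alpha\bv{p}^\star\}_{\alpha=1}^n$ is the orbit of $\bv{p}^\star$ under a group acting simply transitively; identifying the $n$ symbols with the group, $\mat{G}_\alpha\bv{p}^\star$ is just the translate of $\bv{p}^\star$ by $\alpha$. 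Regularity also gives $\tfrac{1}{n}\sum_\alpha\mat{G}_\alpha\bv{p}^\star=\bv{u}_n$ for every $\bv{p}^\star$, which is the identity used in the note preceding the corollary.

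The third step combines achievability with a converse. Given $(\bv{q},C)\in\set{Q}$, take $\rv{W}$ to be the variable on $\{1,\dots,n\}$ with $\Prob{\rv{W}=\alpha}=w_\alpha$ and $\mathsf{P}_{\rv{X}|\rv{W}=\alpha}=\mat{G}_\alpha\bv{p}^\star$; the induced marginal of $\rv{X}$ is $\bv{q}$, and since the $\mat{G}_\alpha$ form a simply transitive group the backward channel is $\rv{X}=\rv{W}\oplus\rv{N}$ with $\rv{N}\sim\bv{p}^\star$ independent of $\rv{W}$ --- additive over the group $\group{G}_i$, i.e.\ a modulo-additive channel. All conditional laws share the value $h_n(\bv{p}^\star)$, so $H(\rv{X}|\rv{W})=h_n(\bv{p}^\star)$, $I(\rv{X};\rv{W})=h_n(\bv{q})-h_n(\bv{p}^\star)$ (which matches the constraint value of $\set{Q}$ once the normalization of $h_n$ is accounted for --- it is maximal precisely when $\bv{q}=\bv{u}_n$, the case of the corollary), and $I(\rv{Y};\rv{W})=h_m(T\bv{q})-h_m(T\bv{p}^\star)$, which equals the flat-envelope value $F_T(\bv{q},\cdot)$; hence this test channel is optimal. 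For the converse I would run the envelope characterization in reverse: any optimizer has all of its conditional laws among the minimizers of $\phi(\cdot,\lambda)$, which by the second step are the $\mat{G}_\alpha\bv{p}^\star$, and mixing them to the prescribed marginal forces the modulo-additive form. The corollary now follows: for $\rv{X}\sim\bv{u}_n$, taking $\bv{w}=\bv{u}_n$ gives $\bv{q}=\bv{u}_n$ for every $\lambda\in[0,1]$ by the second step, while sweeping $\lambda$ over $[0,1]$ covers every $C\in[0,\log n]$ ($\lambda=0$ forces a vertex $\bv{p}^\star$ of $\simplex{n}$, hence $C=\log n$; all sufficiently large $\lambda$ force $\bv{p}^\star=\bv{u}_n$, hence $C=0$; and $\bv{u}_n$ is in fact a critical point of $\phi(\cdot,\lambda)$ for all $\lambda$ by symmetry); thus $(\bv{u}_n,C)\in\set{Q}$ for every such $C$, the optimal backward channel is modulo-additive, and $\rv{W}$ is uniform for the natural choice $\bv{w}=\bv{u}_n$.

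I expect the main difficulty to be importing Witsenhausen's structural result in precisely the form ``the optimal conditional laws are the Lagrangian minimizers that pin down a flat piece of the lower convex envelope,'' and then marrying it to the regular group action to read off ``modulo-additive.'' A delicate point is that the global minimizer set of $\phi(\cdot,\lambda)$ is $\group{G}_i$-invariant only as a set --- its elements need not be $\group{G}_i$-fixed --- so the whole orbit of a single $\bv{p}^\star$ must be carried through the argument rather than a symmetric minimizer; one also has to reconcile the constraint accounting with the paper's normalization of $h_n$ in $C=1-h_n(\bv{p}^\star)$, and verify the sweep of $C$ over $[0,\log n]$ (handling possible jump points of $\bv{p}^\star(\lambda)$, i.e.\ linear pieces of $F_T(\bv{u}_n,\cdot)$) that the corollary requires.
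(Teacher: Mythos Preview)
Your proposal is correct and follows essentially the same route as the paper: pass to the \acrshort{ceb} problem via \autoref{proposition:equivalence_of_ssib_and_ww1975}, use Witsenhausen's envelope characterization (\autoref{lemma:ww1975_thm4p1}), exploit the $\group{G}_i$-invariance of $\phi(\cdot,\lambda)$ to obtain an orbit of minimizers, and then deduce the corollary from the orbit-averaging identity $\tfrac{1}{n}\sum_\alpha \mat{G}_\alpha \bv{p}^\star=\bv{u}_n$. Your sketch is in fact more explicit than the paper in two places --- you spell out why a regular (simply transitive) action of $\group{G}_i$ makes the backward channel literally modulo-additive, and you address the sweep of $\lambda$ needed to cover every $C\in[0,\log n]$ --- both of which the paper leaves implicit.
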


We will prove \autoref{theorem:input_symmetric_transition_matrices} in Appendix \ref{section:proof_of_circulant_case}. In the meantime, let us consider some special cases.

A particular case for which $ \mat{T} $ is input symmetric, is when the channel from $ \rv{X} $ to $ \rv{Y} $ is a modulo-additive noise channel, i.e., there exist a random variable $ \rv{Z} $, with probability vector $ \bv{z} $ such that $ \rv{Y} = \rv{X} \oplus \rv{Z} $, where $ \oplus $ is modulo $ n $ addition. An equivalent representation of the modulo-additive noise channel is using circulant matrix.
A circulant matrix $ \mat{A} \in M_n(\field{F}) $ \cite[p. 33]{Horn2012} has  the form
\begin{equation}\label{eq:circulant_matrix}
	\mat{A} =
	\begin{pmatrix}
		a_1 & a_2 & & \cdots & a_n \\
		a_n & a_1 & a_2 & \cdots & a_{n-1} \\
		\vdots & \vdots & \ddots & \ddots & \vdots \\
		a_2 & a_3 &  \cdots & a_{n} & a_1
	\end{pmatrix},
\end{equation}
i.e, 
the entries in each row are a cyclic permutation of those in the first.
In this case we have the following corollary.
\begin{corollary} \label{corollary:ssib_modulo_additive}
	If $ \mat{T} = \mat{A}  $ as defined in \eqref{eq:circulant_matrix}, than the modulo additive test channel from $ \rv{W} $ to $ \rv{X} $ achieves  $ R_{\mat{A}}(\bv{q},C) $. In particular, there exists an $ n $-ary random variable $ \rv{V} $, with $ H(\rv{V}) = \log n - C $, such that $ \rv{X} = \rv{W} \oplus \rv{V} $ achieves $ R_{\mat{A}}(\bv{q},C) $.
\end{corollary}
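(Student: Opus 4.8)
The plan is to derive \autoref{corollary:ssib_modulo_additive} as an immediate specialization of \autoref{theorem:input_symmetric_transition_matrices}, so the only real work is (i) checking that a row-stochastic circulant $\mat{A}$ meets the hypotheses of that theorem, and (ii) translating the conclusion ``the optimal test channel from $\rv{W}$ to $\rv{X}$ is modulo-additive'' into the explicit form $\rv{X}=\rv{W}\oplus\rv{V}$ together with the value of $H(\rv{V})$. For step (i), recall that an $n\times n$ matrix is circulant iff it commutes with the cyclic-shift permutation matrix $\mat{P}$ (the permutation $i\mapsto i+1\bmod n$), equivalently iff it is a polynomial in $\mat{P}$. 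Hence $\mat{A}\mat{P}^{\alpha}=\mat{P}^{\alpha}\mat{A}$ for all $\alpha$, so $(\mat{P}^{\alpha},\mat{P}^{\alpha})\in\group{G}$ for every $\alpha\in\{0,\dots,n-1\}$, and the input projection $\group{G}_i$ contains the cyclic subgroup $\langle\mat{P}\rangle$, which has order $n$ and acts regularly (hence transitively) on $\{1,\dots,n\}$. Thus $\mat{A}$ is input symmetric and we may take $\{\mat{G}_\alpha\}_{\alpha=1}^n=\{\mat{P}^{\alpha-1}\}_{\alpha=1}^n$ as the order-$n$ input symmetry group demanded by \autoref{theorem:input_symmetric_transition_matrices}. (One also checks, although it is not strictly needed, that a row-stochastic circulant is doubly stochastic, since every column is a cyclic permutation of the first.)

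For step (ii), \autoref{theorem:input_symmetric_transition_matrices} gives, for every $(\bv{q},C)\in\set{Q}$, an optimal $\mathsf{P}_{\rv{X}|\rv{W}}$ that is a modulo-additive channel, i.e.\ there is a random variable $\rv{V}$ independent of $\rv{W}$ with $\rv{X}=\rv{W}\oplus\rv{V}$. With the choice $\{\mat{G}_\alpha\}=\{\mat{P}^{\alpha-1}\}$ the defining constraint $\bv{q}=\sum_{\alpha=1}^n w_\alpha\mat{G}_\alpha\bv{p}^{\star}$ in \eqref{eq:definition_of_the_set_q_C} becomes $\bv{q}=\sum_{\alpha}w_\alpha\mat{P}^{\alpha-1}\bv{p}^{\star}=\bv{w}\circledast\bv{p}^{\star}$, the circular convolution of $\bv{w}$ with $\bv{p}^{\star}$; therefore picking $\rv{W}$ with $\mathsf{P}_{\rv{W}}=\bv{w}$ and $\rv{V}$ with $\mathsf{P}_{\rv{V}}=\bv{p}^{\star}$ realizes $\rv{X}=\rv{W}\oplus\rv{V}\sim\bv{q}$, which pins down the additive noise as $\rv{V}\sim\bv{p}^{\star}$. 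Its entropy is then read off directly: $H(\rv{V})=h_n(\bv{p}^{\star})=\log n-C$ by the relation $C=\log n-h_n(\bv{p}^{\star})$ in \eqref{eq:definition_of_the_set_q_C}; equivalently, this is just the active complexity constraint $I(\rv{X};\rv{W})=H(\rv{X})-H(\rv{X}\mid\rv{W})=\log n-H(\rv{V})=C$ in the uniform normalization.

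The step I would be most careful about — and the only place any real bookkeeping is needed — is matching the abstract set $\set{Q}$ to the circulant structure: one must verify that the permutations $\{\mat{G}_\alpha\}$ appearing in \eqref{eq:definition_of_the_set_q_C} can indeed be taken to be exactly the cyclic shifts $\{\mat{P}^{\alpha-1}\}$ (so that $\sum_\alpha w_\alpha\mat{G}_\alpha\bv{p}^{\star}$ genuinely is a circular convolution and $\rv{V}\sim\bv{p}^{\star}$), and that the normalization used in $C=\log n-h_n(\bv{p}^{\star})$ is the same information scale as in \eqref{eq:ssib_problem_definition}. Beyond that, the corollary is an immediate consequence of \autoref{theorem:input_symmetric_transition_matrices}, with no additional optimization argument required.
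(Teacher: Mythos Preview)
Your proposal is correct and follows exactly the route the paper intends: the paper presents \autoref{corollary:ssib_modulo_additive} as an immediate specialization of \autoref{theorem:input_symmetric_transition_matrices} once one observes that a circulant transition matrix is input symmetric (the paper merely says ``In this case we have the following corollary'' without further argument). Your explicit verification that $\mat{A}\mat{P}^{\alpha}=\mat{P}^{\alpha}\mat{A}$ exhibits the cyclic group inside $\group{G}_i$, and your identification $\rv{V}\sim\bv{p}^{\star}$ with $H(\rv{V})=h_n(\bv{p}^{\star})=\log n-C$, simply make precise what the paper leaves implicit.
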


Although this result greatly simplifies the optimization space, it does not give a precise analytical solution to the problem. In the following subsection, we provide an example, for which the achieving distribution and the objective function value can be fully characterized.

\subsection{Hamming Channels}
Let $ T = T_{\alpha} = \alpha I_n + (1-\alpha)n^{-1} E_n $, where $ I_n $ is the $ n \times n $ identity matrix, $ E_n $ the all ones matrix, and $ 0 \leq \alpha \leq 1 $. The channel with transition matrix $ T_{\alpha} $ is called a Hamming channel with parameter $ \alpha $. Note that $ T_{\alpha} $ is in particular a circulant matrix, therefore by \autoref{corollary:ssib_modulo_additive} the optimal channel from $ \rv{W} $ to $ \rv{X} $ is a modulo-additive channel.
Thus, \eqref{eq:ww1975_problem_definition} can be reformulated as follows.
\begin{equation} \label{eq:ssib_hamming_problem_definition}
	\begin{aligned}
		& F_T(\bv{q},x)  \triangleq  
		& & \underset{\bv{v} \in \simplex{n}}{\text{minimize}}
		& &  h_n(\mat{T}_{\alpha} \bv{v}) \\
		& 
		& & \text{subject to}
		& & h_n(\bv{v}) \geq x.
	\end{aligned}
\end{equation}
The optimization problem defined in \eqref{eq:ssib_hamming_problem_definition} is identical to the problem considered in \cite{Witsenhausen1974}. Furthermore, it was solved for the Hamming channel and the achieving distribution was found.
\begin{lemma}[{\cite[Lemma 7]{Witsenhausen1974}}]
	For $ n \times n $ Hamming channel $ \mat{T}_\alpha $ the solution to \eqref{eq:ssib_hamming_problem_definition} is attained for
	\begin{equation}
		\bv{v} = 	\beta \bv{e} + (1-\beta) \bv{u}_n.
	\end{equation}
	where $ \bv{e} $ is any standard basis vector of $ \simplex{n} $.
\end{lemma}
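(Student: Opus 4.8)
The plan is to solve \eqref{eq:ssib_hamming_problem_definition} directly, in three stages: reduce the inequality constraint to an equality, use first--order optimality to show the optimizer has at most two distinct coordinate values, and then settle the resulting finite comparison. First I would record the identities $\mat{T}_\alpha\bv{v} = \alpha\bv{v} + (1-\alpha)\bv{u}_n$ (since $\mat{E}_n\bv{v}=\bv{1}$ for $\bv{v}\in\simplex{n}$) and $\mat{T}_\alpha(\beta\bv{e}+(1-\beta)\bv{u}_n)=\alpha\beta\bv{e}+(1-\alpha\beta)\bv{u}_n$, the latter being used to read off the optimal value at the end. To see the constraint is active, take any feasible $\bv{v}$ with $h_n(\bv{v})>x$, let $j=\argmax_i v_i$, and slide along the segment toward the vertex $\bv{e}_j$: the directional derivative of $h_n$ at $\bv{v}$ in the direction $\bv{e}_j-\bv{v}$ equals $-(h_n(\bv{v})+\ln v_j)\le 0$ (because $h_n(\bv{v})\ge\ln(1/v_j)$ when $v_j$ is the largest coordinate), and the directional derivative of $\bv{v}\mapsto h_n(\mat{T}_\alpha\bv{v})$ equals $\alpha\sum_i v_i\ln(\tilde v_i/\tilde v_j)\le 0$, where $\tilde v_i=\alpha v_i+(1-\alpha)/n$ and $j$ is also the index maximizing $\tilde v$. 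By concavity both functions are non-increasing along the whole segment, so $\bv{v}$ can be pushed toward $\bv{e}_j$ until $h_n(\bv{v})=x$ without increasing the objective; hence it suffices to solve \eqref{eq:ssib_hamming_problem_definition} with the constraint taken with equality.

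Let $\bv{v}^\star$ be an optimizer of the equality--constrained problem; it satisfies Lagrange stationarity: there are $\mu,\nu$ with $-\alpha\ln\tilde v_i^\star+\mu\ln v_i^\star=\nu$ for every $i$ with $v_i^\star>0$, i.e.\ $\rho(v_i^\star)=\nu$, where $\rho(v)\triangleq\mu\ln v-\alpha\ln(\alpha v+\tfrac{1-\alpha}{n})$. Clearing denominators, $\rho'(v)$ has an affine numerator in $v$, so $\rho$ has at most one stationary point on $(0,\infty)$ and is at most two--to--one; hence $\bv{v}^\star$ takes at most two distinct positive values. The degenerate multiplier cases ($\mu\le 0$, $\mu\ge\alpha$, or some $v_i^\star=0$) force $\bv{v}^\star$ to be $\bv{u}_n$ or a vertex, both of the claimed form, or reduce to the two--value case. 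So $\bv{v}^\star$ has $k$ coordinates equal to some $a$ and $n-k$ equal to some $b$, with $1\le k\le n-1$; and the family $\{\beta\bv{e}+(1-\beta)\bv{u}_n\}$ is precisely the set of two--level vectors with a single ``odd'' coordinate, i.e.\ those with $k\in\{1,n-1\}$.

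It remains to show that among two--level vectors of fixed entropy $x$ the objective is minimized at one with $k\in\{1,n-1\}$. Writing $p=ka$ for the mass on the $a$--coordinates and $\tilde p=\alpha p+\tfrac{k(1-\alpha)}{n}$, one has $h_n(\bv{v})=h_2(p)+p\ln k+(1-p)\ln(n-k)$ and $h_n(\mat{T}_\alpha\bv{v})=h_2(\tilde p)+\tilde p\ln k+(1-\tilde p)\ln(n-k)$, where $h_2$ is the binary entropy. For each $k$ the constraint restricts $p$ to at most two values, so the claim becomes a comparison over finitely many $(k,p)$; I would prove it by relaxing $k$ to a continuous parameter $\kappa\in[1,n-1]$, differentiating the objective along the curve $h_2(p)+p\ln\kappa+(1-p)\ln(n-\kappa)=x$, and showing the derivative keeps a constant sign — equivalently that the objective is monotone (or convex) in $\kappa$ — using strict concavity of $h_2$ and that $\tilde p$ is an increasing affine function of $p$. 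Then $\bv{v}^\star=\beta\bv{e}+(1-\beta)\bv{u}_n$, with $\beta\in[0,1]$ the unique value satisfying $h_n(\beta\bv{e}+(1-\beta)\bv{u}_n)=x$ (unique because $\beta\mapsto h_n(\beta\bv{e}+(1-\beta)\bv{u}_n)$ is non-increasing on $[0,1]$, $\bv{u}_n$ being the entropy maximizer), and the optimal value is $h_n(\alpha\beta\bv{e}+(1-\alpha\beta)\bv{u}_n)$ by the closure property recorded above.

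The main obstacle is the last step: both $h_n(\bv{v})$ and $h_n(\mat{T}_\alpha\bv{v})$ are built from the same scalar map $t\mapsto -t\ln t$ applied to affinely related arguments, and the content of the lemma is exactly that the chain--rule terms along the constraint curve combine into an expression of a single sign. This is elementary but not fully mechanical, and it is the only place where the precise Hamming structure — rather than mere input symmetry — enters.
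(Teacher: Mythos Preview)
The paper does not supply a proof of this lemma: it is quoted as \cite[Lemma~7]{Witsenhausen1974} and the argument is deferred entirely to that reference. So there is no in-paper proof to compare against; any assessment has to be of your proposal on its own terms (or against Witsenhausen's original, which is outside the present paper).

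Your Stages~1 and~2 are sound. The slide toward $\bv{e}_j$ works because a concave function with non-positive directional derivative at a point is non-increasing along the whole ray, and $h_n$ reaches $x$ by continuity. The Lagrange step is correct: $\rho'(v)=\mu/v-\alpha^{2}/(\alpha v+\tfrac{1-\alpha}{n})$ has at most one zero, so the positive coordinates of $\bv{v}^\star$ take at most two values; the degenerate cases collapse once one observes that the active inequality constraint forces $\mu\ge 0$, and that $\mu>0$ together with $v_i^\star=0$ contradicts KKT since $\partial h_n/\partial v_i\to+\infty$.

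The gap is exactly where you flag it, and it is sharper than you suggest. Writing $A:=\ln\tfrac{(1-p)\kappa}{p(n-\kappa)}$, $\tilde A:=\ln\tfrac{(1-\tilde p)\kappa}{\tilde p(n-\kappa)}$ and $C_\kappa:=p/\kappa-(1-p)/(n-\kappa)$, the identity $\tilde p/\kappa-(1-\tilde p)/(n-\kappa)=\alpha C_\kappa$ gives
\[
\frac{dO}{d\kappa}\;=\;\frac{1-\alpha}{n}\,\tilde A\;+\;\alpha\,C_\kappa\,\frac{A-\tilde A}{A}\,,
\]
and in the regime $a>b$ one has $\tilde A<0$, $C_\kappa>0$, $A<\tilde A<0$, so the two summands have \emph{opposite} signs. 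Hence the derivative is not of constant sign, and ``monotone in $\kappa$'' fails as a strategy; you would have to upgrade to a convexity argument (second derivative along the constraint curve) or find a different comparison, and neither is as mechanical as the phrase ``elementary but not fully mechanical'' implies. Note also that majorization does not rescue Stage~3: two-level vectors with the same entropy and different $k$ are in general incomparable in the majorization order (e.g.\ $(0.81,0.063,0.063,0.063)$ versus $(1/2,1/2,0,0)$ at $h_4=\ln 2$), so Schur-concavity of $h_n$ alone is not enough. This last step is the entire content of Witsenhausen's lemma, and your proposal stops short of it.
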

Since $ \bv{v} $ is determined by a single parameter $ \beta $ and satisfies $ h_n(\bv{v}) = \log n - C  $, we can find $ \beta $ explicitly as follows:
\begin{align*}
	& C
	= \log n - h_n(\bv{v}) \\
	&= \frac{n-1}{n} (1-\beta) \log (1-\beta) + \frac{\beta n  + 1- \beta}{n} \log (\beta n + 1-\beta) \\
	& \triangleq g_n(\beta).
\end{align*}
Thus, $ \beta $ can be recovered from $ C $ as $ \beta = g_n^{-1}(C) $.
In summary, we have the following theorem.
\begin{theorem} \label{theorem:ssib_hamming}
	Assume that $ \mat{T} $ is a Hamming channel with parameter $ \alpha $, then $ R_T(\bv{u}_n,C) $ is attained with a Hamming channel with parameter $ \beta = g_n^{-1}(C)   $ and is given by
	\begin{equation}
		R_T \mkern-2mu ( \mkern-2mu \bv{u}_n, \mkern-2mu C \mkern-2mu ) \mkern-5mu =  \mkern-5mu \frac{1 \mkern-4mu + \mkern-4mu (\mkern-2mu n \mkern-4mu - \mkern-4mu 1\mkern-2mu)\alpha \beta}{n} \mkern-2mu  \log  (\mkern-2mu 1 \mkern-2mu + \mkern-2mu ( \mkern-2mu n \mkern-2mu - \mkern-2mu 1)\alpha \beta \mkern-2mu) \mkern-2mu  + \mkern-2mu \frac{1 \mkern-4mu - \mkern-4mu \alpha \beta}{n} \mkern-2mu  \log (1 \mkern-2mu - \mkern-2mu \alpha \beta).
	\end{equation}
\end{theorem}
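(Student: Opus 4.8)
The plan is to chain together the structural reductions already established and then carry out one short explicit entropy computation. First I would specialize \autoref{proposition:equivalence_of_ssib_and_ww1975} to the uniform input $\bv{q}=\bv{u}_n$: since $\mat{T}_\alpha=\alpha I_n+(1-\alpha)n^{-1}E_n$ is doubly stochastic, the output $\rv{Y}$ is uniform as well, so $H(\rv{X})=H(\rv{Y})=\log n$ and the \acrshort{ib}/\acrshort{ceb} identity reads $R_T(\bv{u}_n,C)=\log n-F_T(\bv{u}_n,\log n-C)$. Because $F_T(\bv{u}_n,\cdot)$ is non-decreasing in its argument (a larger lower bound on $H(\rv{X}|\rv{W})$ only shrinks the feasible set), the constraint $I(\rv{X};\rv{W})\le C$ is active at the optimum, i.e.\ $H(\rv{X}|\rv{W})=\log n-C$.

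Next I would invoke \autoref{corollary:ssib_modulo_additive} to restrict the test channel $P_{\rv{X}|\rv{W}}$ to be modulo-additive, $\rv{X}=\rv{W}\oplus\rv{V}$ with $\rv{V}\sim\bv{v}\in\simplex{n}$; then $H(\rv{X}|\rv{W})=h_n(\bv{v})$ and $H(\rv{Y}|\rv{W})=h_n(\mat{T}_\alpha\bv{v})$, which is exactly the reformulation \eqref{eq:ssib_hamming_problem_definition}. By \cite[Lemma 7]{Witsenhausen1974} the minimizing $\bv{v}$ has the form $\bv{v}=\beta\,\bv{e}+(1-\beta)\bv{u}_n$ for a standard basis vector $\bv{e}$ and some $\beta\in[0,1]$. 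Substituting this into $h_n(\bv{v})=\log n-C$ and collecting terms gives $C=g_n(\beta)$ with $g_n$ as displayed just before the theorem; since $g_n'(\beta)=\tfrac{n-1}{n}\log\tfrac{1+(n-1)\beta}{1-\beta}\ge 0$ on $[0,1]$ (strictly positive on $(0,1)$), $g_n$ is strictly increasing there and $\beta=g_n^{-1}(C)$ is well defined. A vector $\bv{v}$ of this two-mass shape makes $P_{\rv{X}|\rv{W}}$ the circulant matrix of \eqref{eq:circulant_matrix} with first column $\bv{v}$, i.e.\ the Hamming matrix $\mat{T}_\beta$, which identifies the optimal test channel as claimed.

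The only genuinely new step is evaluating the objective $h_n(\mat{T}_\alpha\bv{v})$ at this $\bv{v}$. Using $\mat{T}_\alpha\bv{u}_n=\bv{u}_n$ and $\mat{T}_\alpha\bv{e}=\alpha\,\bv{e}+(1-\alpha)\bv{u}_n$, linearity gives $\mat{T}_\alpha\bv{v}=\alpha\beta\,\bv{e}+(1-\alpha\beta)\bv{u}_n$ --- a probability vector of exactly the same shape as $\bv{v}$ but with $\beta$ replaced by $\alpha\beta$. Hence $\log n-h_n(\mat{T}_\alpha\bv{v})=g_n(\alpha\beta)$ by the same one-line identity used for $h_n(\bv{v})$, and therefore $R_T(\bv{u}_n,C)=\log n-h_n(\mat{T}_\alpha\bv{v})=g_n(\alpha\beta)$; expanding $g_n(\alpha\beta)$ and using $\alpha\beta n+1-\alpha\beta=1+(n-1)\alpha\beta$ yields the closed form displayed in the statement.

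I do not expect a real obstacle here: the three structural ingredients --- equivalence with the \acrshort{ceb} problem, reducibility to a modulo-additive test channel, and the extremal two-mass shape of $\bv{v}$ --- are all quoted from earlier results, so the residual work is the elementary observation that $\mat{T}_\alpha$ maps $\beta\bv{e}+(1-\beta)\bv{u}_n$ to a vector of the same shape with $\beta\mapsto\alpha\beta$, together with bookkeeping of entropies. The only points meriting a line of care are the monotonicity of $g_n$ on $[0,1]$ (so that $\beta=g_n^{-1}(C)$ is unambiguous) and the fact that, for this symmetric problem, the minimizer of \eqref{eq:ssib_hamming_problem_definition} over all of $\simplex{n}$ really does lie in the one-parameter family --- but the latter is precisely \cite[Lemma 7]{Witsenhausen1974} and need not be reproved.
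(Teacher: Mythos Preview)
Your proposal is correct and follows exactly the route the paper takes: the paper does not give a separate proof of this theorem but simply summarizes the preceding reduction (Corollary on modulo-additive test channels, then \cite[Lemma 7]{Witsenhausen1974}, then solving $C=g_n(\beta)$), and you have faithfully filled in the residual entropy computation the paper omits. One small caveat: your identity $R_T(\bv{u}_n,C)=g_n(\alpha\beta)$ is the right answer, but when you say ``expanding $g_n(\alpha\beta)$ \ldots\ yields the closed form displayed in the statement'' note that the displayed formula in the paper is missing a factor $(n-1)$ in front of $\tfrac{1-\alpha\beta}{n}\log(1-\alpha\beta)$; your computation is correct and the discrepancy is a typo in the statement, not an error in your argument.
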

\begin{figure}[b!]
	\centering
	\begin{subfigure}[t]{0.49\textwidth}
		\centering
		\input{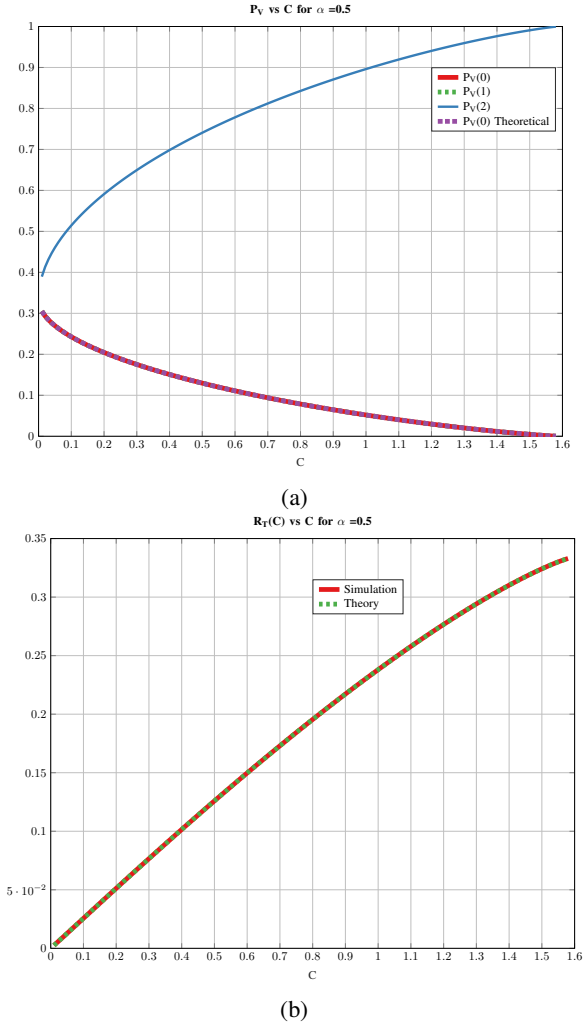}
		\caption{}
	\end{subfigure}
	\begin{subfigure}[t]{0.49\textwidth}
		\centering
%
%
\definecolor{mycolor1}{rgb}{0.89412,0.10196,0.10980}%
\definecolor{mycolor2}{rgb}{0.30196,0.68627,0.29020}%
\begin{tikzpicture}[scale=0.45]
	
	\begin{axis}[%
		width=6.028in,
		height=4.754in,
		at={(1.011in,0.642in)},
		scale only axis,
		xmin=0,
		xmax=1.6,
		xlabel style={font=\color{white!15!black}},
		xlabel={C},
		ymin=0,
		ymax=0.35,
		axis background/.style={fill=white},
		title style={font=\bfseries},
		title={$\text{R}_{\text{T}}\text{(C) vs C for }\alpha\text{ =0.5}$},
		xmajorgrids,
		ymajorgrids,
		legend style={legend cell align=left, align=left, draw=white!15!black,at={(0.5,.9)},anchor=north west}
		]
		\addplot [color=mycolor1, line width=4.0pt]
		table[row sep=crcr]{%
			0.01	0.0025293815781009\\
			0.02	0.00507829446529096\\
			0.03	0.00763648774508985\\
			0.04	0.0102004969933482\\
			0.05	0.0127683388311164\\
			0.06	0.0153386844211512\\
			0.07	0.0179105632394849\\
			0.08	0.0204832259493426\\
			0.09	0.0230560709204644\\
			0.1	0.0256286006798061\\
			0.11	0.0282004202320087\\
			0.12	0.0307714703327662\\
			0.13	0.0333407509308747\\
			0.14	0.0359079557417861\\
			0.15	0.0384736187654851\\
			0.16	0.0410374374957998\\
			0.17	0.043598140539705\\
			0.18	0.0461570329663004\\
			0.19	0.0487127994454841\\
			0.2	0.0512652770444615\\
			0.21	0.0538154537042426\\
			0.22	0.056362046592247\\
			0.23	0.0589053011956104\\
			0.24	0.0614450901371568\\
			0.25	0.0639809297623395\\
			0.26	0.0665137928895887\\
			0.27	0.0690424811690198\\
			0.28	0.071567251583029\\
			0.29	0.0740880024514783\\
			0.3	0.0766046357718184\\
			0.31	0.0791166944422401\\
			0.32	0.0816251737350968\\
			0.33	0.0841288974789731\\
			0.34	0.0866281413910532\\
			0.35	0.089122820965895\\
			0.36	0.0916128537353598\\
			0.37	0.0940977970398813\\
			0.38	0.0965786578231467\\
			0.39	0.099054272576351\\
			0.4	0.101524927102131\\
			0.41	0.103990546434812\\
			0.42	0.106450693093151\\
			0.43	0.108906005060605\\
			0.44	0.111356459559825\\
			0.45	0.113801212982358\\
			0.46	0.116240562952705\\
			0.47	0.118674451616401\\
			0.48	0.121102805748567\\
			0.49	0.123525556597237\\
			0.5	0.125942639823129\\
			0.51	0.128353613907948\\
			0.52	0.130759511945482\\
			0.53	0.133159169754512\\
			0.54	0.135552885706004\\
			0.55	0.137940529482095\\
			0.56	0.140322223816409\\
			0.57	0.142697648824459\\
			0.58	0.145066927249665\\
			0.59	0.147429988116013\\
			0.6	0.149786645878536\\
			0.61	0.152136833334803\\
			0.62	0.154480664777387\\
			0.63	0.156817886953742\\
			0.64	0.159148491258456\\
			0.65	0.161472358325851\\
			0.66	0.163789597283604\\
			0.67	0.166099890574585\\
			0.68	0.168403421687691\\
			0.69	0.170699931407672\\
			0.7	0.172989408783039\\
			0.71	0.175271731938664\\
			0.72	0.177546623143291\\
			0.73	0.179814977125249\\
			0.74	0.182075248572128\\
			0.75	0.184328896378561\\
			0.76	0.186574322659729\\
			0.77	0.188812976872569\\
			0.78	0.191043251618634\\
			0.79	0.193266602700468\\
			0.8	0.195481812947559\\
			0.81	0.197689176368561\\
			0.82	0.19988861408413\\
			0.83	0.202080053918475\\
			0.84	0.204263410856028\\
			0.85	0.206438302661751\\
			0.86	0.20860554609273\\
			0.87	0.210764160417462\\
			0.88	0.212914347884478\\
			0.89	0.215056033216909\\
			0.9	0.217189127812214\\
			0.91	0.219313529682315\\
			0.92	0.221429163131217\\
			0.93	0.223535919071864\\
			0.94	0.225633714561021\\
			0.95	0.227722449012818\\
			0.96	0.229802027956862\\
			0.97	0.231872339124342\\
			0.98	0.233933292069115\\
			0.99	0.235984719609577\\
			1	0.238026691350469\\
			1.01	0.240058926040208\\
			1.02	0.242081363125382\\
			1.03	0.244093588617885\\
			1.04	0.246096394114954\\
			1.05	0.248088456660748\\
			1.06	0.25007079365014\\
			1.07	0.252042282180428\\
			1.08	0.254003742375733\\
			1.09	0.25595427565331\\
			1.1	0.257894021348915\\
			1.11	0.259822839955945\\
			1.12	0.261725368577643\\
			1.13	0.263646846298036\\
			1.14	0.26554200582826\\
			1.15	0.267425907200772\\
			1.16	0.26929784806986\\
			1.17	0.271157935240843\\
			1.18	0.273005983506222\\
			1.19	0.274841825259368\\
			1.2	0.27666489039307\\
			1.21	0.27847612633412\\
			1.22	0.280274196113285\\
			1.23	0.282059259510854\\
			1.24	0.283831115158443\\
			1.25	0.285589520827711\\
			1.26	0.287333871688324\\
			1.27	0.289065056981009\\
			1.28	0.290781297001878\\
			1.29	0.292483840232912\\
			1.3	0.294171260613824\\
			1.31	0.295843637119162\\
			1.32	0.29750065585857\\
			1.33	0.29914158861978\\
			1.34	0.300767231347243\\
			1.35	0.302375680120087\\
			1.36	0.303967751671717\\
			1.37	0.305542729666827\\
			1.38	0.307099761913763\\
			1.39	0.308638281319745\\
			1.4	0.310158120862871\\
			1.41	0.311658745811114\\
			1.42	0.3131393729508\\
			1.43	0.314599681017221\\
			1.44	0.316037755125629\\
			1.45	0.317452707244495\\
			1.46	0.31884670026277\\
			1.47	0.32021513834979\\
			1.48	0.321558277097019\\
			1.49	0.322873689327229\\
			1.5	0.324160758954011\\
			1.51	0.325416981626596\\
			1.52	0.326639977594767\\
			1.53	0.327826576240473\\
			1.54	0.328972904548115\\
			1.55	0.330073337200614\\
			1.56	0.331119959754591\\
			1.57	0.332098083423662\\
			1.58	0.332976242372765\\
		};
		\addlegendentry{Simulation}
		
		\addplot [color=mycolor2, dashed, line width=4.0pt]
		table[row sep=crcr]{%
			0.01	0.00252839801565874\\
			0.02	0.00508232014056453\\
			0.03	0.00763964469958966\\
			0.04	0.0101948699420329\\
			0.05	0.0127608798402528\\
			0.06	0.0153404999201294\\
			0.07	0.0179102309157897\\
			0.08	0.020477445029035\\
			0.09	0.0230498949476667\\
			0.1	0.0256348018251427\\
			0.11	0.028193416312319\\
			0.12	0.0307852234533819\\
			0.13	0.0333447897683112\\
			0.14	0.0359058082259436\\
			0.15	0.0384789184706471\\
			0.16	0.0410444628588036\\
			0.17	0.043587605018621\\
			0.18	0.046155851762427\\
			0.19	0.0487273944362796\\
			0.2	0.0512626151189071\\
			0.21	0.0538076068024016\\
			0.22	0.0563640414215305\\
			0.23	0.0589104309374699\\
			0.24	0.061442508361933\\
			0.25	0.0639715058545089\\
			0.26	0.0665037211070165\\
			0.27	0.069027998091399\\
			0.28	0.0715771768089299\\
			0.29	0.0740948243103896\\
			0.3	0.0765973781295068\\
			0.31	0.0791214266747096\\
			0.32	0.0816309808568236\\
			0.33	0.0841497145381762\\
			0.34	0.0866269360917742\\
			0.35	0.0891150615914666\\
			0.36	0.0916146461703162\\
			0.37	0.0940978448579517\\
			0.38	0.0965718835261031\\
			0.39	0.0990321386176845\\
			0.4	0.10153936045829\\
			0.41	0.103979633401874\\
			0.42	0.106463182724901\\
			0.43	0.108898090631753\\
			0.44	0.111345690123415\\
			0.45	0.11378689425644\\
			0.46	0.116242724193517\\
			0.47	0.118671847618263\\
			0.48	0.121113298355722\\
			0.49	0.123527235900384\\
			0.5	0.125941826814562\\
			0.51	0.128343864967232\\
			0.52	0.130753575164326\\
			0.53	0.133178034401215\\
			0.54	0.135569175036486\\
			0.55	0.137932252711634\\
			0.56	0.140336199389228\\
			0.57	0.142699923295872\\
			0.58	0.14507930871737\\
			0.59	0.147446935953226\\
			0.6	0.149779424196541\\
			0.61	0.152135629777975\\
			0.62	0.154509956597198\\
			0.63	0.156833914647433\\
			0.64	0.159152994190537\\
			0.65	0.161470009208212\\
			0.66	0.163766107006464\\
			0.67	0.166096408533233\\
			0.68	0.168440691890017\\
			0.69	0.170678311964149\\
			0.7	0.172998924667977\\
			0.71	0.175249770781372\\
			0.72	0.177525225808243\\
			0.73	0.179792675780683\\
			0.74	0.182095313999853\\
			0.75	0.184340609291412\\
			0.76	0.186581001044066\\
			0.77	0.188805983105536\\
			0.78	0.191034598612346\\
			0.79	0.193279719720002\\
			0.8	0.195504859123132\\
			0.81	0.197695243536352\\
			0.82	0.199899438190622\\
			0.83	0.202089283186602\\
			0.84	0.204283569282951\\
			0.85	0.206458181541426\\
			0.86	0.20860535972673\\
			0.87	0.210761053107208\\
			0.88	0.212920105877607\\
			0.89	0.215059259683089\\
			0.9	0.217179510009315\\
			0.91	0.219318605098987\\
			0.92	0.221425122669127\\
			0.93	0.2235530256428\\
			0.94	0.225631738498246\\
			0.95	0.227715992757901\\
			0.96	0.229811204509025\\
			0.97	0.231895682004869\\
			0.98	0.23392296733761\\
			0.99	0.235967637429197\\
			1	0.238024924849803\\
			1.01	0.240081569744368\\
			1.02	0.242058846911967\\
			1.03	0.244110447740564\\
			1.04	0.246084106787463\\
			1.05	0.248094702136778\\
			1.06	0.250055054988388\\
			1.07	0.252042404948297\\
			1.08	0.25401510877129\\
			1.09	0.255982574843397\\
			1.1	0.257908766705282\\
			1.11	0.259806712517919\\
			1.12	0.261723850450288\\
			1.13	0.263632307291065\\
			1.14	0.265564088501987\\
			1.15	0.267407640788178\\
			1.16	0.269305908769802\\
			1.17	0.271187012300665\\
			1.18	0.273015010653568\\
			1.19	0.27480138707454\\
			1.2	0.276647589273077\\
			1.21	0.278470969498784\\
			1.22	0.280281026786595\\
			1.23	0.282033254775489\\
			1.24	0.283810745753966\\
			1.25	0.285619459575786\\
			1.26	0.287326131468151\\
			1.27	0.289070276029498\\
			1.28	0.29075571344277\\
			1.29	0.292509824578434\\
			1.3	0.294152127082875\\
			1.31	0.295834011994937\\
			1.32	0.297493938909542\\
			1.33	0.2991819093623\\
			1.34	0.300751945869386\\
			1.35	0.302381773331112\\
			1.36	0.303939325248925\\
			1.37	0.305529917947254\\
			1.38	0.307130476158997\\
			1.39	0.308618305561746\\
			1.4	0.310186963599191\\
			1.41	0.311668436396021\\
			1.42	0.313156162151994\\
			1.43	0.314599629277902\\
			1.44	0.316059147947167\\
			1.45	0.31742387104179\\
			1.46	0.318877421593978\\
			1.47	0.320204661781847\\
			1.48	0.321534361732743\\
			1.49	0.322859822432114\\
			1.5	0.324172051943135\\
			1.51	0.325400855907222\\
			1.52	0.326630967932068\\
			1.53	0.327822093982709\\
			1.54	0.32900028519477\\
			1.55	0.330066115324913\\
			1.56	0.331123864956177\\
			1.57	0.332055094110013\\
			1.58	0.332964712079097\\
		};
		\addlegendentry{Theory}
		
	\end{axis}
\end{tikzpicture}%
		\caption{}
	\end{subfigure}
	\caption{(a) Optimal $ \bv{v} $ for $ \alpha = 0.5 $ vs $ C $. (b) $ R_{T_{\alpha}} (C) $ vs $ C $ for $ \alpha=0.5 $.}
	\label{figure:tito_hamming_vs_C_alpha_0p5}
\end{figure}

\subsection{Examples}
Now let us consider two special cases.
\subsubsection{\acrshort{bms}}
Assume that the channel from $ \rv{X} $ to $ \rv{Y} $ is a \acrshort{bms} channel. Let $ \bv{z} $ be an $ m $-ary probability vector and $ G_m $ be the $ m\times  m $ anti-diagonal matrix with unit entries. The respective transition matrix in this case is $ \mat{T} = [\bv{z}, G_m \bv{z}] $. Note that
\begin{equation}
	G_m T = [\mat{G}_m \bv{z}, \mat{G}_m \mat{G}_m \bv{z}]  = [\bv{z} \mat{G}_2 \bv{z}] = T G_2.
\end{equation}
Therefore, $ \mat{T} $ is input symmetric stochastic matrix with input symmetry group $ \group{G}_i $ of order $ 2 $. Thus, since the only binary-input binary-output symmetric channel is  a \acrshort{bsc}, combining with \autoref{theorem:input_symmetric_transition_matrices}, we recover the following result from \cite{Sutskover2005}.
\begin{corollary}[{\cite[Theorem 2]{Sutskover2005}}]
	Given that the channel from $ \rv{X} $ to $ \rv{Y} $ is a \acrshort{bms}, then \acrshort{bsc} channel from $ \rv{X} $ to $ \rv{W} $ maximizes $ I(\rv{W};\rv{Y}) $.
\end{corollary}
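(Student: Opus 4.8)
The plan is to obtain this corollary as an immediate specialization of \autoref{theorem:input_symmetric_transition_matrices} to the case $n=2$. The preceding paragraph already did the structural work: for a \acrshort{bms} channel the transition matrix $\mat{T}=[\bv{z},\mat{G}_m\bv{z}]$ satisfies $\mat{G}_m\mat{T}=\mat{T}\mat{G}_2$, so $\mat{T}$ is input symmetric with input symmetry group $\group{G}_i$ of order $2$, and the hypothesis of \autoref{theorem:input_symmetric_transition_matrices} is met. First I would invoke that theorem — more precisely its corollary for a uniformly distributed $\rv{X}$, which is the natural input prior attached to a \acrshort{bms} channel and which covers the whole range $C\in[0,\log 2]$ — to conclude that the optimal test channel from $\rv{W}$ to $\rv{X}$ is a modulo-$2$ additive noise channel, i.e.\ $\rv{X}=\rv{W}\oplus\rv{V}$ with $\rv{V}$ binary and independent of $\rv{W}$, and $\rv{W}$ uniform.

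Second, I would identify this structure with a \acrshort{bsc}. For $n=2$, a modulo-additive channel $\rv{X}=\rv{W}\oplus\rv{V}$ with $\rv{V}\sim\Ber(p)$ is exactly a \acrshort{bsc} with crossover probability $p$; since the \acrshort{bsc} is symmetric and (by the uniform-$\rv{X}$ corollary) $\rv{W}$ is uniform, the induced forward channel $P_{\rv{W}|\rv{X}}$ is again a \acrshort{bsc} with the same crossover probability. Because $R_{\mat{T}}(\bv{q},C)$ is, by \autoref{proposition:equivalence_of_ssib_and_ww1975} and the definition in \eqref{eq:ssib_problem_definition}, precisely $\max I(\rv{W};\rv{Y})$, the \acrshort{bsc} test channel just exhibited attains this maximum, which is the assertion of \cite[Theorem 2]{Sutskover2005}.

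The only genuinely delicate point — and the step I would be most careful about — is matching the range of the complexity constraint: \autoref{theorem:input_symmetric_transition_matrices} guarantees the modulo-additive form only on the set $\set{Q}$, whereas Sutskover's statement is for all $C$. This gap is closed exactly by the uniform-$\rv{X}$ corollary (a \acrshort{bms} channel comes with a uniform input, and then the characterization holds for every $C\in[0,\log 2]$). One should also spell out the otherwise trivial observation that the only binary-input binary-output symmetric channels are the \acrshort{bscs}, so that "modulo-$2$ additive" and "\acrshort{bsc}" are interchangeable here. Everything else is direct substitution into results proved earlier in the paper, so no new estimates are needed.
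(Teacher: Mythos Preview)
Your proposal is correct and follows essentially the same route as the paper: verify that a \acrshort{bms} transition matrix is input symmetric with $\group{G}_i$ of order $2$, invoke \autoref{theorem:input_symmetric_transition_matrices} (via its uniform-$\rv{X}$ corollary), and observe that the only $2\times 2$ symmetric/modulo-additive channel is a \acrshort{bsc}. Your treatment is in fact more careful than the paper's terse argument, since you explicitly pass through the uniform-$\rv{X}$ corollary to cover the full range $C\in[0,\log 2]$ rather than only the set $\set{Q}$.
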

The latter result can also be deduced from \cite{Chayat1989}.

\subsubsection{\acrfull{tito} Circulant Matrix}

The general \acrshort{tito} Circulant Matrix is defined as follows:
\begin{equation} \label{eq:tito_transition_matrix}
	T = 
	\begin{pmatrix}
		1-\alpha-\beta & \alpha & \beta \\
		\beta & 1-\alpha-\beta  & \alpha \\
		\alpha & \beta & 1-\alpha - \beta
	\end{pmatrix}.
\end{equation}

We can further ask if there are values of  $ C $ such that $ R(C) $ can be achieved with $ \rv{W}  $ taking at most two points. The following corollary states the opposite. 
\begin{corollary} \label{corollary:ssib_symtito_cardinality}
	The minimum cardinality of $ \rv{W} $ that achieves $ R(C) $ is exactly 3 for $ C \neq 0 $.
\end{corollary}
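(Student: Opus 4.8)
Assume $ \rv{X} $ is uniform, so $ \bv{q}=\bv{u}_3 $ and $ (\bv{u}_3,C)\in\set{Q} $ for every $ C $, and exclude the degenerate parameter choices where $ \mat{T} $ is a permutation matrix (then $ R(C)=C $) or $ \mat{T}=\tfrac13 E_3 $ (then $ R\equiv 0 $), for which the statement reads differently. By \autoref{lemma:ssib_cardinality} every optimal $ \rv{W} $ has $ |\mathcal W|\le 3 $, and by \autoref{corollary:ssib_modulo_additive} an optimal $ \rv W $ satisfies $ \rv X=\rv W\oplus\rv V $, which for uniform $ \rv X $ forces $ \rv W $ uniform on three letters; so $ |\mathcal W|=3 $ is attained. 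It remains to show that for $ C\neq 0 $ no $ \rv W $ with $ |\mathcal W|\le 2 $ attains $ R(C) $. A constant $ \rv W $ gives $ I(\rv Y;\rv W)=0 $, whereas $ R(C)=\log 3-F_{\mat T}(\bv u_3,\log 3-C) $ (via \autoref{proposition:equivalence_of_ssib_and_ww1975}) is concave with $ R(0)=0 $ and $ R(\log 3)=I(\rv X;\rv Y)>0 $, so $ R(C)>0 $ on $ (0,\log 3] $ and $ |\mathcal W|=1 $ is excluded. A binary $ \rv W $ has $ I(\rv X;\rv W)\le\log 2 $, so (as $ R $ is strictly increasing here — the flat part of $ F_{\mat T}(\bv u_3,\cdot) $ does not reach $ x=0 $) the range $ C\in(\log 2,\log 3) $ is immediate, and for $ C=\log 3 $ the equality $ I(\rv Y;\rv W)=I(\rv X;\rv Y) $ would force $ \markov{X}{W}{Y} $, i.e. two columns of $ \mat T $ to coincide, impossible for a non‑degenerate circulant.

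The heart of the matter is $ |\mathcal W|=2 $ with $ C\in(0,\log 2] $. I would use the convex‑analytic description of $ F_{\mat T} $ from \cite{Witsenhausen1975}: with $ x=\log 3-C $ and its Lagrange multiplier $ \lambda=\lambda(x)\ge 0 $, every atom $ \bv p_w $ of any optimal test channel is a contact point of the supporting hyperplane, i.e. a minimiser over $ \simplex 3 $ of $ \bv p\mapsto h_3(\mat T\bv p)-\lambda h_3(\bv p)-\langle\bm\mu,\bv p\rangle $ for a suitable dual $ \bm\mu $. Since $ \bv u_3 $ is fixed by the cyclic group $ \group G_i=\{I,P,P^2\} $ and $ \mat T $ is circulant, the dual function is $ \group G_i $‑invariant, so $ \bm\mu $ can be averaged to a multiple of $ \bv 1 $ and hence taken to be $ 0 $. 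Consequently every optimal atom lies in $ \set M(\lambda):=\argmin_{\bv p\in\simplex 3}\phi(\bv p,\lambda) $, a $ \group G_i $‑invariant set — a union of cyclic orbits, possibly together with $ \{\bv u_3\} $ — on each cyclic orbit of which $ h_3 $ is constant.

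Now suppose $ \rv W $ is binary and optimal, with distinct atoms $ \bv p_1,\bv p_2\in\set M(\lambda) $, weights $ \pi_1,\pi_2>0 $, and $ \pi_1\bv p_1+\pi_2\bv p_2=\bv u_3 $. If one atom equals $ \bv u_3 $ then so does the other — a contradiction; hence both lie in cyclic orbits, so $ \widehat{\bv p_i}(1)\neq 0 $, where $ \widehat{\,\cdot\,} $ denotes the discrete Fourier transform over $ \mathbb Z_3 $ and $ \widehat{\bv p}(1)=0\Leftrightarrow\bv p=\bv u_3 $ on $ \simplex 3 $. Transforming $ \pi_1\bv p_1+\pi_2\bv p_2=\bv u_3 $ at frequency $ 1 $ yields $ \pi_1\widehat{\bv p_1}(1)+\pi_2\widehat{\bv p_2}(1)=0 $, i.e. $ \widehat{\bv p_2}(1) $ is a negative real multiple of $ \widehat{\bv p_1}(1) $. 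If $ \bv p_1,\bv p_2 $ lie in the \emph{same} cyclic orbit, then $ \widehat{\bv p_2}(1)=\omega^{j}\widehat{\bv p_1}(1) $ with $ j\in\{1,2\} $ and $ \omega=e^{2\pi i/3} $ — never a negative multiple, a contradiction. (Equivalently, the three points of a cyclic orbit $ \{\bv r,P\bv r,P^2\bv r\} $ with $ \bv r\neq\bv u_3 $ are affinely independent in $ \simplex 3 $, since the order‑$3$ map $ P $ would otherwise act as an involution or the identity on the line through them.) This settles every case in which $ \set M(\lambda) $ is $ \{\bv u_3\} $ or a single cyclic orbit, which is in particular what occurs at $ \lambda=0 $ (there $ \set M=\{\bv e_0,\bv e_1,\bv e_2\} $) and for $ \lambda $ large (there $ \set M=\{\bv u_3\} $).

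The remaining, and principal, obstacle is the case in which $ \bv p_1 $ and $ \bv p_2 $ lie in \emph{two different} non‑central cyclic orbits of $ \set M(\lambda) $; this can happen only at the isolated values of $ \lambda $ for which $ \phi(\cdot,\lambda) $ attains its minimum on more than one orbit, and there $ \pi_1\widehat{\bv p_1}(1)+\pi_2\widehat{\bv p_2}(1)=0 $ is solvable precisely when the two orbits' Fourier phases $ \arg\widehat{\bv p}(1) $ differ by an odd multiple of $ \pi/3 $. I expect to close this gap by a direct study of the critical‑orbit structure of $ \phi(\bv p,\lambda)=h_3(\mat T\bv p)-\lambda h_3(\bv p) $ for the \acrshort{tito} circulant: following the minimising orbit as $ \lambda $ decreases from $ |\widehat{\mat T}(1)|^{2} $ (where it merges into $ \bv u_3 $) down to $ 0 $ (the vertex orbit, whose Fourier coefficient is real), and showing that its phase class modulo $ 2\pi/3 $ is pinned throughout, equivalently that $ \set M(\lambda) $ never consists of two orbits whose Fourier phases differ by $ \pi/3 $. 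In the symmetric subcase $ \alpha=\beta $ this is automatic: by \cite{Witsenhausen1974} the minimisers are $ \beta\bv e+(1-\beta)\bv u_3 $, all with real $ \widehat{\bv p^*}(1)=\beta $, so the phase obstruction never arises. This structural fact about $ \phi $ is the one genuinely new ingredient; everything else is the symmetry and contact‑point bookkeeping above.
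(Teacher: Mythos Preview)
Your proposal is correct on the parts you actually complete, and it is considerably more careful than the paper's own proof. The paper's argument is essentially just your single-orbit case, stated without Fourier: it takes the cyclic orbit $\{\bv{p}_1,\bv{p}_2,\bv{p}_3\}=\{\bv{p}^*,\Pi_2\bv{p}^*,\Pi_3\bv{p}^*\}$ of a minimiser of $\phi(\cdot,\lambda)$, asserts that a two-atom optimum would force two of these three to coincide, and checks componentwise that any such coincidence implies $\bv{p}^*=\bv{u}_3$, hence $C=0$. That is the entire proof.

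In particular, the paper does \emph{not} justify why the two atoms of a putative binary optimum must both come from this single orbit; it tacitly assumes $\set{M}(\lambda)$ is a single cyclic orbit --- precisely the ``principal obstacle'' you isolate. Your DFT argument that two distinct points of one length-$3$ cyclic orbit cannot average to $\bv{u}_3$ is equivalent to (and a bit slicker than) the paper's coincidence check. Your preliminary exclusions of $|\mathcal W|=1$ and of $C\in(\log 2,\log 3]$, your observation that at $\bv{q}=\bv{u}_3$ the dual vector may be symmetrised to zero so that every optimal atom lies in $\set{M}(\lambda)$, and your proposed phase-continuation argument for the multi-orbit case, all go beyond what the paper supplies. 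So there is no missing idea relative to the paper; if anything, you have identified a gap the paper leaves open.
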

The proof of this corollary relegated to Appendix \ref{section:proof_sy_tito_cardinality}.

We proceed to verify \autoref{theorem:ssib_hamming} via numerical optimization for $ n =3 $. Since $ \rv{V} $ is independent of the choice of $ \alpha $, we freeze $ \alpha = 0.5 $ and compare it with respect to the value of $ C $. \autoref{figure:tito_hamming_vs_C_alpha_0p5} shows the probability vector $ \rv{V} $ and $ R_{T_\alpha}(C) $ for various values of $ \alpha $. We observe that the numerical optimization agrees with theoretical arguments of   \autoref{theorem:ssib_hamming}.

\section{The Symmetric Privacy Funnel}

In this section we consider a special symmetric setting for the \acrshort{pf} problem \eqref{eq:sspf_problem_definition2}  for which the transition matrix from $ \rv{X} $ to $ \rv{Y} $ is an input symmetric stochastic matrix as  defined in \autoref{definition:symmetric_group}.
\begin{theorem} \label{theorem:sspf_modulo_additive}
	Let $ T $ be an input symmetric stochastic matrix with input symmetry group $ \group{G}_i $ of order $ n $, and $ \rv{X} $ be a uniformly distributed random variable. Let $ (G_1=I,G_2,\dots, G_n ) \in \group{G}_i $. Furthermore, denote by $ (\bv{p}^*,\lambda^*) $ a pair for which
	\begin{equation} \label{eq:sspf_argmax_phi}
		\phi(\bv{u},\lambda^*) = \phi(\bv{p}^*,\lambda^*) \geq \phi(\bv{p},\lambda^*) \quad \forall \bv{p} \in \simplex{n}.
	\end{equation}
	Then, for every $ C \leq C^* \triangleq \log n - h_n(\bv{p}^*) $, the transition matrix from $ \rv{W} $ to $ \rv{X} $, given by 
	\begin{equation} \label{eq:sspf_Bmat}
		\mat{B} = 
		\begin{pmatrix}
			\bv{p}^* & \mat{G}_2 \bv{p}^* & \cdots & \mat{G}_n \bv{p}^* & \bv{u}
		\end{pmatrix},
	\end{equation}
	achieves \eqref{eq:sspf_problem_definition}. Moreover,
	\begin{equation}
		\reflectbox{R}_{\scaleto{\mathsf{P}_{\rv{X}\rv{Y}}}{4pt}}(C) = C  \cdot \frac{\log n - h_n(\mat{T} \bv{p}^*)}{\log n - h_n(\bv{p}^*)}.
	\end{equation}
	Also, \eqref{eq:sspf_Bmat} implies that the transition matrix from $ \rv{X} $  to $ \rv{W} $ is a class of noisy $ n $-ary symmetric erasure channel.
\end{theorem}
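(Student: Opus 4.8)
The plan is to read \autoref{theorem:sspf_modulo_additive} as an upper-concave-envelope statement and to pair a Lagrangian (supporting-line) converse with an explicit orbit construction that meets it. First I would pass to the equivalent formulation \eqref{eq:sspf_problem_definition2}. Since $\rv{X}$ is uniform, a test channel with $|\mathcal{W}|\le n+1$ (which suffices, as the maximum in \eqref{eq:sspf_problem_definition2} is attained with at most $n+1$ values of $\rv{W}$) is described by posterior columns $\bv{b}_1,\dots,\bv{b}_k\in\simplex{n}$ and weights $\bv{w}\in\simplex{k}$ with $\sum_\ell w_\ell \bv{b}_\ell=\bv{u}_n$ and $\sum_\ell w_\ell h_n(\bv{b}_\ell)=x:=\log n-C$, and objective $H(\rv{Y}|\rv{W})=\sum_\ell w_\ell h_m(\mat{T}\bv{b}_\ell)$. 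I would also record that $\mat{T}$ is doubly stochastic (its row-sum vector is invariant under the output permutations induced by the transitive $\group{G}_i$, hence constant), so that $H(\rv{Y})=h_m(\mat{T}\bv{u}_n)=\log n$.

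For the converse I would invoke the multiplier $\lambda^*$ of \eqref{eq:sspf_argmax_phi}. For any feasible test channel,
\begin{equation*}
  H(\rv{Y}|\rv{W})-\lambda^* H(\rv{X}|\rv{W})=\sum_\ell w_\ell\,\phi(\bv{b}_\ell,\lambda^*)\le \phi(\bv{p}^*,\lambda^*),
\end{equation*}
because $\phi(\cdot,\lambda^*)$ is maximized over $\simplex{n}$ at $\bv{p}^*$ and the weights sum to one. Hence $\reflectbox{F}_{\scaleto{\mat{T}}{4pt}}(\bv{u}_n,x)\le \lambda^* x+\phi(\bv{p}^*,\lambda^*)$ for every $x$, an affine-in-$x$ bound. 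I would then pin this line down using the two tangencies built into \eqref{eq:sspf_argmax_phi}: at $x=\log n$ it is tight, with value $h_m(\mat{T}\bv{u}_n)=\log n$ (since $\phi(\bv{u}_n,\lambda^*)=\phi(\bv{p}^*,\lambda^*)$ and independence forces $H(\rv{Y}|\rv{W})=H(\rv{Y})$ there), and at $x=h_n(\bv{p}^*)$ its value is $h_m(\mat{T}\bv{p}^*)$; so the bound is exactly the line through $(h_n(\bv{p}^*),h_m(\mat{T}\bv{p}^*))$ and $(\log n,\log n)$, i.e. $\reflectbox{F}_{\scaleto{\mat{T}}{4pt}}(\bv{u}_n,x)\le \log n-(\log n-h_m(\mat{T}\bv{p}^*))\,\tfrac{\log n-x}{\log n-h_n(\bv{p}^*)}$ on $x\in[h_n(\bv{p}^*),\log n]$.

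For achievability I would show the orbit of $\bv{p}^*$ attains this line. As $\group{G}_i$ is transitive of order $n$ it acts regularly, so $\tfrac1n\sum_{\alpha=1}^n \mat{G}_\alpha\bv{p}^*=\bv{u}_n$; moreover $h_n(\mat{G}_\alpha\bv{p}^*)=h_n(\bv{p}^*)$, and writing $\mat{T}\mat{G}_\alpha=\mat{\Pi}_\alpha\mat{T}$ as in \autoref{definition:symmetric_group} gives $h_m(\mat{T}\mat{G}_\alpha\bv{p}^*)=h_m(\mat{\Pi}_\alpha\mat{T}\bv{p}^*)=h_m(\mat{T}\bv{p}^*)$. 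Taking $t:=C/C^*\in[0,1]$ and the channel $\mat{B}$ of \eqref{eq:sspf_Bmat} with weights $(t/n,\dots,t/n,1-t)$ on $(\mat{G}_1\bv{p}^*,\dots,\mat{G}_n\bv{p}^*,\bv{u}_n)$, the $\rv{X}$-marginal is $t\bv{u}_n+(1-t)\bv{u}_n=\bv{u}_n$, $H(\rv{X}|\rv{W})=t\,h_n(\bv{p}^*)+(1-t)\log n=x$, and $H(\rv{Y}|\rv{W})=t\,h_m(\mat{T}\bv{p}^*)+(1-t)\log n$, which equals the line value. Hence the converse holds with equality, $\mat{B}$ is optimal, and
\begin{equation*}
  \reflectbox{R}_{\scaleto{\mathsf{P}_{\rv{X}\rv{Y}}}{4pt}}(C)=H(\rv{Y})-H(\rv{Y}|\rv{W})=t\big(\log n-h_m(\mat{T}\bv{p}^*)\big)=C\cdot\frac{\log n-h_m(\mat{T}\bv{p}^*)}{\log n-h_n(\bv{p}^*)}.
\end{equation*}
For the last assertion I would apply Bayes' rule to $\mat{B}$ with input distribution $\bv{u}_n$: the column $\bv{u}_n$ sends every input to one common symbol $w=n+1$ with the common probability $1-t$, while the remaining outputs form the orbit of a single law under $\group{G}_i$ --- the defining structure of a noisy $n$-ary symmetric erasure channel.

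The step I expect to be the main obstacle is the converse bookkeeping: verifying that the single Lagrangian inequality, evaluated at the doubly-tangent $\lambda^*$ of \eqref{eq:sspf_argmax_phi}, produces the line through both tangent points $(h_n(\bv{p}^*),h_m(\mat{T}\bv{p}^*))$ and $(\log n,\log n)$, and that the construction $\mat{B}$ lands exactly on that line rather than strictly below it. A secondary technical point is establishing $\tfrac1n\sum_\alpha\mat{G}_\alpha\bv{p}^*=\bv{u}_n$ and the entropy invariances from \autoref{definition:symmetric_group}, together with the doubly-stochastic reduction $H(\rv{Y})=\log n$. Note that $C\le C^*$ is precisely the condition keeping $t\in[0,1]$, i.e. keeping us in the regime where this line is the envelope; for $C>C^*$ one would need posterior columns of entropy below $h_n(\bv{p}^*)$, which $\mat{B}$ does not supply.
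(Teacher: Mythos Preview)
Your proposal is correct and follows essentially the same route as the paper: a supporting-line (Lagrangian) converse at $\lambda^*$ yielding $H(\rv{Y}|\rv{W})\le\lambda^* x+\phi(\bv{p}^*,\lambda^*)$, combined with achievability via the $(n{+}1)$-point mixture over the orbit $\{\mat{G}_\alpha\bv{p}^*\}_{\alpha=1}^n$ and $\bv{u}_n$. The paper phrases the converse as an upper-concave-envelope computation on the convex hull of $\{(\bv{p},h_n(\bv{p}),h_m(\mat{T}\bv{p})):\bv{p}\in\simplex{n}\}$ in the style of \autoref{lemma:ww1975_thm4p1}, but the content and the resulting parametrization $\epsilon=(x-h_n(\bv{p}^*))/(\log n-h_n(\bv{p}^*))$ are identical to your $t=1-\epsilon=C/C^*$.
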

Note that the optimization procedure in \eqref{eq:sspf_argmax_phi} is performed once for every $ C \in [0,\log_2 n - h_n(\bv{p}^*) ] $. Moreover, for $ C  \in [0,\log_2 n - h_n(\bv{p}^*) ]  $, the optimal test-channel from $ \rv{X} $ to $ \rv{W} $ is no longer symmetric as we show using an example.

The proof of this theorem is postponed to Appendix \ref{section:proof_sspf_modulo_additive}.

We now provide some examples that illustrate \autoref{theorem:sspf_modulo_additive}.

\subsection{Examples}
We begin with the simplest scenario where  $ \rv{X} $ is a binary random variable. Plugging this choice in \autoref{theorem:sspf_modulo_additive} and noting that $ \bv{p}^*=\bv{e} $ in this case, results in the following corollary.
\begin{corollary}
	Assume that the channel from $ \rv{X} $ to $ \rv{Y} $ is a \acrshort{bms}, then, \acrshort{bec} test-channel $ \mathsf{P}_{\rv{W}|\rv{X}} $ with parameter $ \epsilon = 1-C $ minimizes $ I(\rv{Y};\rv{W}) $ subject to $ I(\rv{X};\rv{W}) = C $.
\end{corollary}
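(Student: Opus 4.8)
The plan is to obtain this corollary as the $n=2$ instance of \autoref{theorem:sspf_modulo_additive}, so that the only genuine work is pinning down the pair $(\bv{p}^*,\lambda^*)$ appearing in \eqref{eq:sspf_argmax_phi}. As recorded in the preceding discussion, a \acrshort{bms} channel has transition matrix $\mat{T}=[\bv{z},\,\mat{G}_2\bv{z}]$, where $\mat{G}_2$ is the $2\times2$ swap permutation, so its input symmetry group is $\group{G}_i=\{I_2,\mat{G}_2\}$, of order $2$; together with the standing hypothesis that $\rv{X}$ is uniform, this places us squarely inside \autoref{theorem:sspf_modulo_additive} with $(G_1,G_2)=(I_2,\mat{G}_2)$.

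First I would identify $(\bv{p}^*,\lambda^*)$. Writing $\bv{p}=(p,1-p)$ and using $\mat{T}\mat{G}_2=\mat{G}_m\mat{T}$ together with the entropy–invariance of the permutation $\mat{G}_m$, the map $\bv{p}\mapsto h_m(\mat{T}\bv{p})$ depends on $\bv{p}$ only through $t\triangleq h_2(\bv{p})\in[0,1]$; call it $f(t)$, so $f(0)=h_m(\bv{z})=\condent{Y}{X}$ (at a deterministic input) and $f(1)=h_m(\mat{T}\bv{u})$ (at a uniform input). \acrlong{mgl} is exactly the assertion that $f$ is convex on $[0,1]$: for a \acrshort{bsc} crossover $\mat{T}$ this is its classical form, and a general \acrshort{bms} channel is a mixture of \acrshort{bscs}, under which convexity is preserved. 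Hence, for every $\lambda$, the function $t\mapsto f(t)-\lambda t$ is convex and therefore attains its maximum over $[0,1]$ at an endpoint; choosing $\lambda^*\triangleq f(1)-f(0)=\mutinf{X}{Y}$ (which lies in $[0,\ent{X}]=[0,1]$) equalizes the two endpoint values, so $\phi(\bv{u},\lambda^*)=\phi(\bv{e},\lambda^*)\ge\phi(\bv{p},\lambda^*)$ for all $\bv{p}\in\simplex{2}$. Thus \eqref{eq:sspf_argmax_phi} holds with $\bv{p}^*=\bv{e}$, a standard basis vector of $\simplex{2}$ (so $h_2(\bv{p}^*)=0$), and in particular $C^*=\log 2-h_2(\bv{p}^*)=\ent{X}=1$, so the theorem covers the entire admissible range $C\in[0,\ent{X}]$.

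It then remains to read off the test channel. Substituting $\bv{p}^*=\bv{e}$ and $\mat{G}_2\bv{p}^*$ (the other basis vector) into \eqref{eq:sspf_Bmat} gives the backward channel
\begin{equation*}
	\mat{B}=\begin{pmatrix}1 & 0 & 1/2\\ 0 & 1 & 1/2\end{pmatrix},
\end{equation*}
so $\rv{W}$ takes three values: two of them determine $\rv{X}$, while the third — call it the erasure symbol $e$ — leaves $\rv{X}$ uniform. Inverting this through Bayes' rule with the uniform prior on $\rv{X}$ shows at once that $\mathsf{P}_{\rv{W}|\rv{X}}$ is a \acrshort{bec} whose erasure probability is $\epsilon=\Prob{\rv{W}=e}$, and since then $\condent{X}{W}=\epsilon\cdot h_2(1/2)=\epsilon$, the active constraint forces $\mutinf{X}{W}=\ent{X}-\condent{X}{W}=1-\epsilon=C$, i.e. $\epsilon=1-C$. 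By \autoref{theorem:sspf_modulo_additive} this channel minimizes $\mutinf{Y}{W}$, which is the claim.

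I expect the verification of \eqref{eq:sspf_argmax_phi} for $\bv{p}^*=\bv{e}$ — equivalently, the convexity of $f$ — to be the only step that needs an actual idea, and it is supplied verbatim by \acrlong{mgl} (with the mixture argument for the \acrshort{bms} case); identifying $\mat{B}$ with a \acrshort{bec} and solving for $\epsilon$ are routine manipulations with Bayes' rule and the binary entropy function.
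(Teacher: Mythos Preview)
Your proposal is correct and follows essentially the same route as the paper: specialize \autoref{theorem:sspf_modulo_additive} to $n=2$ and note that $\bv{p}^*=\bv{e}$, whence $C^*=1$ and the channel \eqref{eq:sspf_Bmat} is a \acrshort{bec} with $\epsilon=1-C$. You actually go further than the paper by supplying a justification for $\bv{p}^*=\bv{e}$ via \acrlong{mgl} (and its extension to \acrshort{bms} channels through the BSC-mixture decomposition), whereas the paper simply asserts this fact.
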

Note that this result recovers \cite[Theorem 1]{Sutskover2005}, but here with only one-sided symmetry restriction.

We further illustrate \autoref{theorem:sspf_modulo_additive} using numerical optimization for a particular choice of the channel from $ \rv{X} $ to $ \rv{Y} $ being a symmetric \acrshort{tito} with parameters $ (\alpha,\beta) = (0.1,0.05) $, as defined in \eqref{eq:tito_transition_matrix}. For this  choice of channel parameters, $ C^* =  0.59 $. In \autoref{figure:sspf_symtito_alpha_0p1_beta_0p05_RvsC} we compare the results of global optimization solution of  \eqref{eq:sspf_problem_definition} versus the method described in \autoref{theorem:sspf_modulo_additive} for various values of $ C $. We observe that our results from \autoref{theorem:sspf_modulo_additive} agree with the brute-force numerical optimization for all values of $ C \in [0,C^*]  $. For values greater than $ C^* $ the theoretical curve is restricted to input symmetric transition matrices  from $ \rv{X} $ to $ \rv{W} $. In this region of link capacity,  the numerical optimization achieves lower rates. By carefully observing the numerical solution, one can notice that the optimal  test-channel in this region is no longer input symmetric.

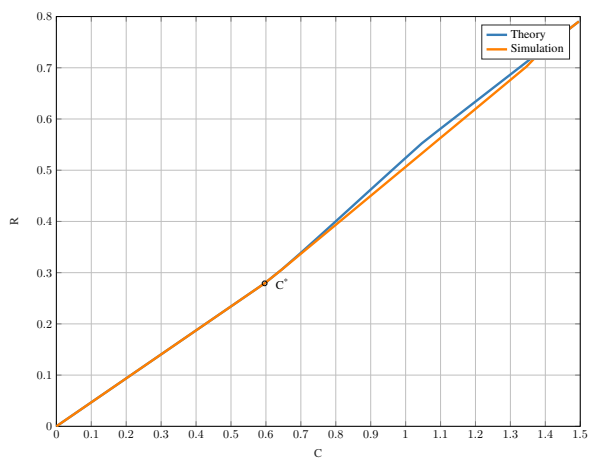
\begin{figure}[b!]
	\centering
%
%
\definecolor{mycolor1}{rgb}{0.21569,0.49412,0.72157}%
\definecolor{mycolor2}{rgb}{1.00000,0.49804,0.00000}%
\begin{tikzpicture}[scale=0.45]
	
	\begin{axis}[%
		width=6.028in,
		height=4.754in,
		at={(1.011in,0.642in)},
		scale only axis,
		xmin=0,
		xmax=1.5,
		xlabel style={font=\color{white!15!black}},
		xlabel={C},
		ymin=0,
		ymax=0.8,
		ylabel style={font=\color{white!15!black}},
		ylabel={R},
		axis background/.style={fill=white},
		xmajorgrids,
		ymajorgrids,
		legend style={legend cell align=left, align=left, draw=white!15!black}
		]
		\addplot [color=mycolor1, line width=2.0pt]
		table[row sep=crcr]{%
			0	0\\
			0.0499999999999999	0.0234133639007899\\
			0.0999999999999999	0.0468263762784363\\
			0.15	0.070239389575562\\
			0.2	0.0936524032088098\\
			0.25	0.117065347492647\\
			0.300000000001062	0.140478430377192\\
			0.350000000000118	0.163891444568642\\
			0.4	0.187304458784675\\
			0.45	0.210717472760513\\
			0.5	0.234130486767231\\
			0.549999999999996	0.257543441865843\\
			0.596202803346836	0.279178598200568\\
			0.64620280334684	0.306627220394201\\
			0.69620280334004	0.336442232439104\\
			0.746202803346787	0.366973420397012\\
			0.796202803345997	0.397832391801042\\
			0.846202803347513	0.428836389505368\\
			0.896202803347353	0.459873781189727\\
			0.946202803346833	0.49085409838481\\
			0.996202803343182	0.521713701330983\\
			1.04620280334681	0.552379369485204\\
			1.09620280334684	0.579153080473284\\
			1.14620280334684	0.605569409582508\\
			1.19620280334684	0.631985738691733\\
			1.24620280334684	0.658402067800957\\
			1.29620280334684	0.684818396910181\\
			1.34620280334684	0.711234726019406\\
			1.39620280334684	0.73765105512863\\
			1.44620280334684	0.764067384237854\\
			1.49620280334684	0.790483713347078\\
		};
		\addlegendentry{Theory}
		
		\addplot [color=mycolor2, line width=2.0pt]
		table[row sep=crcr]{%
			0	0\\
			0.0499874358837442	0.0234081970564289\\
			0.0999749892041695	0.0468153760207062\\
			0.149966785037927	0.0702245452447915\\
			0.199966825737283	0.0936375764282787\\
			0.249961112300046	0.117047914039205\\
			0.299933058941223	0.14044779112612\\
			0.349913171691581	0.163851492202362\\
			0.399927935912668	0.187271419617207\\
			0.44992646397477	0.210683744304666\\
			0.499933500887677	0.234100053347381\\
			0.54988736405373	0.257491462998469\\
			0.596113240387037	0.279137569910844\\
			0.64611189486527	0.306384929727258\\
			0.695563275003478	0.334353607256732\\
			0.746111754398221	0.362942791902162\\
			0.796084733583848	0.391206482725555\\
			0.846028025810318	0.419453385958484\\
			0.896082329134592	0.447774592866752\\
			0.946138325554235	0.476074068431975\\
			0.996047894743738	0.504301894742702\\
			1.04609253733721	0.53260611607382\\
			1.0961056771215	0.560970943605068\\
			1.14623265237884	0.589243661962113\\
			1.19625732620697	0.617536582799755\\
			1.24610710170711	0.645730590601524\\
			1.29623369659963	0.674081162400264\\
			1.34613677570611	0.702305323530622\\
			1.39597123805401	0.737530801260316\\
			1.44608913292301	0.764009416873294\\
			1.49609086480762	0.79042666051194\\
		};
		\addlegendentry{Simulation}
		
		\addplot [color=black, draw=none, mark=o, mark options={solid, black}, forget plot]
		table[row sep=crcr]{%
			0.596202803346836	0.279137569910844\\
		};
		\node[right, align=left]
		at (axis cs:0.616,0.279) {$\text{C}^\text{*}$};
	\end{axis}
	
	\begin{axis}[%
		width=7.778in,
		height=5.833in,
		at={(0in,0in)},
		scale only axis,
		xmin=0,
		xmax=1,
		ymin=0,
		ymax=1,
		axis line style={draw=none},
		ticks=none,
		axis x line*=bottom,
		axis y line*=left,
		legend style={legend cell align=left, align=left, draw=white!15!black}
		]
	\end{axis}
\end{tikzpicture}%
	\caption{Optimal $ R $ for $ \alpha = 0.1 $, $ \beta = 0.05 $ vs $ C $}
	\label{figure:sspf_symtito_alpha_0p1_beta_0p05_RvsC}
\end{figure}


\section{Outlook}
As said, the \acrlong{ib} and Privacy Funnel are  two dual optimization problems which have been applied in a variety of emerging applications such as Deep Neural Networks, Privacy Algorithms, and design of Polar Codes  \cite{Goldfeld2020}. It also  interesting to consider rather more classical use-cases, i.e, multi-user channel capacity and Noisy Source Coding problems.
A comprehensive summary of the different relations between  the \acrshort{ib} and Privacy Funnel problems has been presented in \cite{Asoodeh2020}.

\section*{Acknowledgment}
This work has been supported by the European Union's 
Horizon 2020 Research
And Innovation Programme, grant agreement no. 694630.

\bibliographystyle{IEEEtran}
\bibliography{bibliography}

\newpage
\appendix
\subsection{Proof of \autoref{proposition:equivalence_of_ssib_and_ww1975}}\label{section:proof_equivalence_of_ssib_and_ww1975}
Since $ P_{\rv{X}\rv{Y}} $ is fixed, then $ H(\rv{X}) $ and $ H(\rv{Y}) $ are determined. It follows that the constraint $ I(\rv{X};\rv{W})  \leq C $ is equivalent to $ H(\rv{X}|\rv{W}) \geq h_n(\bv{q}) - C  $. 
In addition, $ I(\rv{Y};\rv{W})  = H(\rv{Y}) - H(\rv{Y}|\rv{W})$. Hence \eqref{eq:ssib_problem_definition} can be rewritten as follows
\begin{align} \label{eq:ssib_problem_definition3}
	R_T(\bv{q},C) &= H(\rv{Y})- \min_{P_{\rv{X}|\rv{W}}:H(\rv{X}|\rv{W}) \geq h_{n}(\bv{q}) -  C}   H(\rv{Y}|\rv{W}) \\
	&=h_m(T \cdot \bv{q}) - F_T(\bv{q},h_n(\bv{q}) - C).
\end{align}
This concludes the proof of \autoref{proposition:equivalence_of_ssib_and_ww1975}.

\subsection{Proof of the main result}\label{section:proof_of_circulant_case}
The equivalence of the \acrshort{ib} and \acrshort{ceb} problems has been shown in \autoref{proposition:equivalence_of_ssib_and_ww1975}. Therefore, we will consider the equivalent \acrshort{ceb} formulation in our proof. We begin with stating the main utility result of \cite{Witsenhausen1975}.
\begin{lemma}[{\cite[Theorem 4.1]{Witsenhausen1975}}] \label{lemma:ww1975_thm4p1}
	Let $ \phi(\bv{p},\lambda) \triangleq h_m(\mat{T}\bv{p}) - \lambda h_n(\bv{p}) $ and let $ \psi(\cdot,\lambda) $ be the lower convex envelope on $ \simplex{n} $ of $ \phi(\cdot,\lambda) $. Then
	\begin{enumerate}
		\item \begin{align}
			F_T(\bv{q},x) &= \max \{\psi(\bv{q},\lambda) + \lambda x | 0\leq \lambda \leq 1 \}, \\
			F_T(\bv{q},0) &= H(\rv{Y}|\rv{X}), \\
			F_T(\bv{q},H(\rv{X})) &= H(\rv{Y}) = h_m(\mat{T}\bv{q}).
		\end{align}
		\item  If a point of the graph of $ \psi(\cdot,\lambda) $ is the convex combination of $ k $ points of the graph of $ \phi(\cdot,\lambda) $ with arguments $ \bv{p}_{\alpha} $ and weights $ w_{\alpha} $ , ($\alpha =1,\dots,k$), then
		\begin{equation}
			F_T\left( \sum_{\alpha=1}^k w_\alpha \bv{p}_\alpha, \sum_{\alpha=1}^{k}w_\alpha h_n(\bv{p}_{\alpha}) \right)  =\sum_{\alpha=1}^k w_a h_m(T\bv{p}_{\alpha}).
		\end{equation}
		\item  If for some $ \bv{w} $ and $ \lambda $, $ \phi(\bv{q},\lambda) = \psi(\bv{q},\lambda) $, this corresponds to a line supporting the graph og $ F_T(\bv{q},\cdot) $ at the endpoint $ x = h_n(\bv{q}) $.
	\end{enumerate}
\end{lemma}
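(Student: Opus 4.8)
The plan is to recognise $F_T(\bv{q},\cdot)$ as the optimal value of a linearly constrained convex program and to extract all three parts from Lagrangian/Fenchel duality. The starting point is the elementary identity: if $\rv{W}$ takes value $\alpha$ with probability $w_\alpha$ and $P_{\rv{X}|\rv{W}=\alpha}=\bv{p}_\alpha$, then admissibility of $\rv{W}\to\rv{X}\to\rv{Y}$ with $\rv{X}\sim\bv{q}$ amounts to $\sum_\alpha w_\alpha\bv{p}_\alpha=\bv{q}$, while $H(\rv{X}|\rv{W})=\sum_\alpha w_\alpha h_n(\bv{p}_\alpha)$ and $H(\rv{Y}|\rv{W})=\sum_\alpha w_\alpha h_m(T\bv{p}_\alpha)$. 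Hence $F_T(\bv{q},x)=\min\{\sum_\alpha w_\alpha h_m(T\bv{p}_\alpha):\sum_\alpha w_\alpha\bv{p}_\alpha=\bv{q},\ \sum_\alpha w_\alpha h_n(\bv{p}_\alpha)\ge x\}$, the minimum running over finite mixtures and attained by Carath\'eodory's theorem together with compactness of $\simplex{n}$. From this representation one reads off that $F_T(\bv{q},\cdot)$ is non-decreasing and convex on $[0,h_n(\bv{q})]$ (mixing two mixtures keeps $\bv{q}$ fixed and averages both entropies, so the achievable pairs $(H(\rv{X}|\rv{W}),H(\rv{Y}|\rv{W}))$ form a convex set). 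The endpoint identities are also immediate: $F_T(\bv{q},0)=H(\rv{Y}|\rv{X})$ since $H(\rv{Y}|\rv{W})\ge H(\rv{Y}|\rv{X})$ by the \acrshort{dpi} with equality at $\rv{W}=\rv{X}$, and $F_T(\bv{q},h_n(\bv{q}))=h_m(T\bv{q})$ since $H(\rv{X}|\rv{W})=H(\rv{X})$ forces $\rv{W}$ independent of $\rv{X}$, hence of $\rv{Y}$.

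For the variational formula I would dualise the constraint $\sum_\alpha w_\alpha h_n(\bv{p}_\alpha)\ge x$ with a multiplier $\lambda\ge0$. The Lagrangian separates across mixture components as $\sum_\alpha w_\alpha\phi(\bv{p}_\alpha,\lambda)+\lambda x$, and minimising it over all mixtures representing $\bv{q}$ returns exactly $\psi(\bv{q},\lambda)+\lambda x$, since by definition $\psi(\bv{q},\lambda)=\inf\{\sum_\alpha w_\alpha\phi(\bv{p}_\alpha,\lambda):\sum_\alpha w_\alpha\bv{p}_\alpha=\bv{q}\}$ is the lower convex envelope. This yields weak duality $F_T(\bv{q},x)\ge\psi(\bv{q},\lambda)+\lambda x$ for every $\lambda\ge0$. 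To see that only $\lambda\in[0,1]$ matters, observe that the \acrshort{dpi} in the form $I(\rv{Y};\rv{W})\le I(\rv{X};\rv{W})$ rewrites as $H(\rv{Y}|\rv{W})\ge h_m(T\bv{q})-h_n(\bv{q})+H(\rv{X}|\rv{W})$, so the slope-$1$ affine map $x\mapsto h_m(T\bv{q})-h_n(\bv{q})+x$ minorises $F_T(\bv{q},\cdot)$ and touches it at $x=h_n(\bv{q})$; together with monotonicity (slopes $\ge0$), every supporting line of the convex function $F_T(\bv{q},\cdot)$ has slope in $[0,1]$.

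Strong duality then follows because a closed convex function on a compact interval is the supremum of its affine minorants: $F_T(\bv{q},x)=\sup_\lambda(\lambda x+c(\lambda))$ with $c(\lambda)=\inf_{x'}(F_T(\bv{q},x')-\lambda x')$, and the previous step restricts the sup to $\lambda\in[0,1]$. It remains to identify $c(\lambda)=\psi(\bv{q},\lambda)$: the bound "$\ge$" is the rearranged weak-duality inequality, and "$\le$" comes from picking an optimal envelope decomposition $\bv{q}=\sum_\alpha w_\alpha\bv{p}_\alpha$ with $\sum_\alpha w_\alpha\phi(\bv{p}_\alpha,\lambda)=\psi(\bv{q},\lambda)$, forming the corresponding $\rv{W}$, and noting it is admissible with $x'=\sum_\alpha w_\alpha h_n(\bv{p}_\alpha)\le h_n(\bv{q})$ (concavity of $h_n$) and $H(\rv{Y}|\rv{W})-\lambda H(\rv{X}|\rv{W})=\psi(\bv{q},\lambda)$, whence $c(\lambda)\le F_T(\bv{q},x')-\lambda x'\le\psi(\bv{q},\lambda)$. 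This gives $F_T(\bv{q},x)=\max_{0\le\lambda\le1}(\psi(\bv{q},\lambda)+\lambda x)$, the maximum being attained since $\psi(\bv{q},\cdot)$ is concave, hence continuous, on $[0,1]$.

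Parts 2 and 3 are then corollaries. If $(\bv{q},\psi(\bv{q},\lambda))$ is the convex combination with weights $w_\alpha$ of graph points $(\bv{p}_\alpha,\phi(\bv{p}_\alpha,\lambda))$, the mixture $\rv{W}$ it defines is admissible at $x=\sum_\alpha w_\alpha h_n(\bv{p}_\alpha)$, giving $F_T(\bv{q},x)\le\sum_\alpha w_\alpha h_m(T\bv{p}_\alpha)$; the reverse bound is weak duality, $F_T(\bv{q},x)\ge\psi(\bv{q},\lambda)+\lambda x=\sum_\alpha w_\alpha\phi(\bv{p}_\alpha,\lambda)+\lambda\sum_\alpha w_\alpha h_n(\bv{p}_\alpha)=\sum_\alpha w_\alpha h_m(T\bv{p}_\alpha)$, so equality holds. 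Part 3 is the degenerate case $k=1$, $\bv{p}_1=\bv{q}$: $\phi(\bv{q},\lambda)=\psi(\bv{q},\lambda)$ yields $F_T(\bv{q},h_n(\bv{q}))=h_m(T\bv{q})$, and the weak-duality line $x\mapsto\phi(\bv{q},\lambda)+\lambda x$ passes through $(h_n(\bv{q}),h_m(T\bv{q}))$ with slope $\lambda$ while lying below $F_T(\bv{q},\cdot)$, i.e.\ it supports the graph at that endpoint. I expect the main obstacle to be the strong-duality/tightness step: one must marry the convexity and closedness of $F_T(\bv{q},\cdot)$ with the mixture characterisation of $\psi$, check that the attained envelope decomposition genuinely produces an admissible $\rv{W}$ whose $x'$ stays in the domain, and justify restricting the dual multiplier to $[0,1]$ via the \acrshort{dpi}; the remainder is bookkeeping.
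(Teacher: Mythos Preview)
The paper does not prove this lemma at all; it is quoted verbatim as \cite[Theorem~4.1]{Witsenhausen1975} and used as a black box. Your proposal therefore cannot be compared against a proof in the paper, but it is a correct and self-contained argument in the spirit of Witsenhausen's original: you recast $F_T(\bv{q},\cdot)$ as a mixture minimisation, dualise the entropy constraint, identify the dual function with the lower convex envelope $\psi(\bv{q},\lambda)$, and close the gap via the biconjugate characterisation of convex functions together with the \acrshort{dpi} bound on slopes. The derivations of parts~2 and~3 as complementary-slackness consequences are also sound.
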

By \autoref{lemma:ww1975_thm4p1}, evaluation of $ F_T(\bv{q},x) $ goes through the analysis of $ \phi(\bv{p},\lambda) $. We proceed with deriving some properties of $ \phi(\bv{p},\lambda) $ for input symmetric matrices $ T $.
\begin{proposition} \label{proposition:phi_p_lambda_input_symmetric}
	Let $ T $ be input symmetric stochastic matrix with input symmetry group $ \group{G}_i $. Then, for every $ \mat{G} \in \group{G}_i $
	\begin{equation}
		\phi(\mat{G}\bv{p},\lambda) = \phi(\bv{p},\lambda) .
	\end{equation}
\end{proposition}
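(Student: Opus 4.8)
The plan is to unwind the definition of $\phi$ and exploit two facts: entropy is invariant under relabelling of coordinates (i.e.\ under left multiplication by a permutation matrix), and every input symmetry $\mat{G}$ is, by construction, paired with an output symmetry $\mat{\Pi}$ satisfying the intertwining relation $\mat{T}\mat{G}=\mat{\Pi}\mat{T}$.

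First I would fix $\mat{G}\in\group{G}_i$. By \autoref{definition:symmetric_group}, $\group{G}_i$ is the projection onto the first factor of $\group{G}=\{(\mat{G},\mat{\Pi})\in\group{S}_n\times\group{S}_m\mid \mat{T}\mat{G}=\mat{\Pi}\mat{T}\}$, so there exists $\mat{\Pi}\in\group{G}_o$ with $\mat{T}\mat{G}=\mat{\Pi}\mat{T}$. Hence, for any $\bv{p}\in\simplex{n}$,
\begin{equation}
	\mat{T}(\mat{G}\bv{p})=(\mat{T}\mat{G})\bv{p}=(\mat{\Pi}\mat{T})\bv{p}=\mat{\Pi}(\mat{T}\bv{p}).
\end{equation}
Since $\mat{\Pi}$ is an $m\times m$ permutation matrix, $\mat{\Pi}(\mat{T}\bv{p})$ is just a reordering of the entries of the probability vector $\mat{T}\bv{p}$, so $h_m(\mat{\Pi}\mat{T}\bv{p})=h_m(\mat{T}\bv{p})$; likewise $h_n(\mat{G}\bv{p})=h_n(\bv{p})$ because $\mat{G}$ permutes the entries of $\bv{p}$. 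Substituting these into the definition gives
\begin{equation}
	\phi(\mat{G}\bv{p},\lambda)=h_m(\mat{T}\mat{G}\bv{p})-\lambda h_n(\mat{G}\bv{p})=h_m(\mat{T}\bv{p})-\lambda h_n(\bv{p})=\phi(\bv{p},\lambda),
\end{equation}
which is exactly the claim.

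There is no substantive obstacle here; the only point that requires a moment's care is to justify that $\mat{G}\in\group{G}_i$ actually supplies a matching $\mat{\Pi}$ with $\mat{T}\mat{G}=\mat{\Pi}\mat{T}$ — this is immediate from the definition of $\group{G}_i$ as a projection of $\group{G}$, but it is worth stating explicitly since the invariance of $h_m(\mat{T}\bv{p})$ under the action of $\mat{G}$ on $\bv{p}$ would otherwise not be obvious. The $h_n$ term is trivially invariant because permutation matrices preserve multisets of coordinates, and entropy depends only on the multiset of probabilities.
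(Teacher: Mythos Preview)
Your proposal is correct and follows essentially the same argument as the paper: both use the intertwining relation $\mat{T}\mat{G}=\mat{\Pi}\mat{T}$ together with the permutation invariance of the entropy function to pass from $\phi(\mat{G}\bv{p},\lambda)$ to $\phi(\bv{p},\lambda)$. Your write-up is in fact slightly more careful, since you explicitly justify the existence of a matching $\mat{\Pi}$ from the definition of $\group{G}_i$ as a projection of $\group{G}$, a step the paper leaves implicit.
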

\begin{proof}
	Utilizing the symmetry property of the entropy function
	we have
	\begin{align}
		\phi(\mat{G}\bv{p},\lambda)
		&=  h_m(\mat{T} \mat{G} \bv{p}) - \lambda h_n( \mat{G} \bv{p}) \\
		&=  h_m(\mat{\Pi} \mat{T} \bv{p}) - \lambda h_n( \bv{p}) \\
		&=  h_m( \mat{T} \bv{p}) - \lambda h_n( \bv{p}) \\
		&= \phi(\bv{p},\lambda).
	\end{align}
\end{proof}

Now, let $ \bv{p}^* $ be the minimizer of $ \phi(\cdot,\lambda) $ over $ \bv{p} \in \simplex{n} $, i.e.,
\begin{equation}
	\bv{p}^* \triangleq \argmin_{\bv{p} \in \simplex{n}} \phi(\bv{p},\lambda).
\end{equation}
By \autoref{proposition:phi_p_lambda_input_symmetric} and the assumption that $ \group{G}_i $ is of order $ n $, we have
\begin{equation}
	\phi(\mat{G}_\alpha\bv{p}^*,\lambda) = \phi(\bv{p}^*,\lambda)  \qquad \forall \mat{G}_{\alpha} \in \group{G}_i, \alpha \in \{1,\dots,n\}.
\end{equation}
Further, denote $ \bv{p}_{\alpha} \triangleq \mat{G}_{\alpha} \bv{p}^* $ and consider a specific weights' vector $ \bv{w} = \{w_{\alpha}\}_{\alpha=1}^n $. Now, let $ \psi(\cdot ,\lambda) $ be the lower convex envelope  of $ \phi(\cdot,\lambda) $.   We obtain
\begin{equation*}
	\psi\left(\sum_{\alpha=1}^n w_a \bv{p}_{\alpha},\lambda \right) = \sum_{\alpha=1}^n w_a \phi(\bv{p}_{\alpha},\lambda) = \phi(\bv{p}^*,\lambda),
\end{equation*}
since $ \phi(\bv{p}^*,\lambda) $ is the minimum of $ \phi(\cdot,\lambda) $ over $ \simplex{n} $.
Therefore, by \autoref{lemma:ww1975_thm4p1}, it follows that
\begin{align}
	&F_T \left(\sum_{\alpha=1}^n w_a \bv{p}_{\alpha},\sum_{\alpha=1}^n w_a  h_n\left( \bv{p}_{\alpha}\right) \right)  \\
	&=\sum_{\alpha=1}^k w_a h_m(T\bv{p}_{\alpha}) \\
	&= h_m(T\bv{p}^{\star}) \\
	&= F_T \left(\sum_{\alpha=1}^n w_a G_{\alpha} \bv{p}^{\star},h_n\left( \bv{p}^{\star}\right) \right).
\end{align}

To this end, we have chosen $ \lambda \in (0,1) $, then obtained $ \bv{p}^* $ and found a solution for $ F_T(\bv{q},x) $ where $ \bv{q} = \sum_{\alpha=1}^n w_\alpha G_{\alpha} \bv{p}^* $ and $ x = h_n(\bv{p}^{\star}) $ for any $ \bv{w} \in \simplex{n} $.  
Let $ \set{S} $ be the set defined in \eqref{eq:definition_of_the_set_q_C}.
Thus, for every $ (\bv{q},x) \in \set{S} $, $ F_T(\bv{q},x) $ is achieved with input symmetric transition matrix from $ \rv{W} $ to $ \rv{X} $. An $ n \times  n $ input symmetric matrix is a circulant matrix.  This completes the proof of the main result.

\subsection{Proof of \autoref{corollary:ssib_symtito_cardinality}}\label{section:proof_sy_tito_cardinality}
In general, for every $ \bv{p}_1 = \bv{p}_\lambda = (p_\lambda,r_\lambda,1-p_\lambda-r_\lambda)^T $ that minimizes $ \phi(\bv{p},\lambda) $, there exists
\begin{equation*}
	\bv{p}_2 \mkern-4mu = \mkern-4mu \Pi_2 \bv{p}_1  \mkern-4mu = \mkern-4mu
	\begin{pmatrix}
		r_\lambda \\ 1-p_\lambda-r_\lambda \\ p_\lambda
	\end{pmatrix},
	\
	\bv{p}_3 \mkern-4mu =  \mkern-4mu \Pi_3 \bv{p}_1 \mkern-4mu = \mkern-4mu
	\begin{pmatrix}
		\mkern-2mu	1-p_\lambda-r_\lambda \\ p_\lambda \\ r_\lambda \mkern-2mu
	\end{pmatrix}.
\end{equation*}
Assume that $ |\set{Z}|=2 $, therefore either $ \bv{p}_1 = \bv{p}_2 $ or $ \bv{p}_1 = \bv{p}_3 $ or $ \bv{p}_2 = \bv{p}_3 $. Any of this conditions imply that $ \bv{p}_1 = \bv{p}_2 = \bv{p}_3 = \bv{u}_3 $, and $ |\set{W}| = 1 $. This further implies that $ |\set{\rv{W}}|\neq 2 $ and if $ |\set{W}| = 1 $, than $ C = \log 3 - h(\bv{u}_3) = 0 $.

\subsection{Proof of \autoref{theorem:sspf_modulo_additive}}\label{section:proof_sspf_modulo_additive}
Define the set $ \set{S} $ as the collection of points $ (\bv{p},h_n(\bv{p}),h_n(\mat{T}\bv{p}) $ for every $ \bv{p} \in \simplex{n} $. Let $ \set{C} $ denote the convex hull of $ \set{S} $. In similar manner to \cite{Witsenhausen1975}, one can show that $ \set{C} $ is determined by the following set of triples $ (\bv{p},\xi,\eta) $.
\begin{align*}
	\bv{p} &= \sum_{\alpha=1}^{n+1} w_{\alpha} \bv{p}_{\alpha}, \\
	\xi &= \sum_{\alpha=1}^{n+1} w_{\alpha} h_n( \bv{p}_{\alpha}),\\
	\eta &= \sum_{\alpha=1}^{n+1} w_{\alpha} h_n( \bv{p}_{\alpha}),
\end{align*}
for all $ \bv{w} \in \simplex{n+1} $ and $ \bv{p}_{\alpha} \in \simplex{n} $.
Furthermore, $ \reflectbox{F}_{\scaleto{\mat{T}}{4pt}}(\bv{q},x) $ is the maximum of all $ \eta $ for which $ \bv{q} = \bv{p} $, $ x = \xi $ belong to $ \set{C} $.

In a very similar manner to \cite{Witsenhausen1975} and the proof of \autoref{theorem:input_symmetric_transition_matrices}, our goal is to find the upper convex envelope of $ \phi(\bv{p},\lambda) $ using at most $ n+1 $ points. If 
\begin{equation} \label{eq:sspf_proof_phi_inequality_condition}
	\phi(\bv{u},\lambda) \geq \phi(\bv{p},\lambda) \qquad \forall \bv{p} \in \simplex{n}
\end{equation}
with equality only for $ \bv{p} = \bv{u} $, then we are done, since in this case the only relavant point is $ \bv{p} = \bv{u} $ and $ R=C=0 $ in this case. Assume that there exists $ \lambda^{\star} $ for which the equality in \eqref{eq:sspf_proof_phi_inequality_condition} also holds for $ \bv{p}^* $.
Since $ \mat{T} $ is input symmetric with input symmetry group of order $ n $, and the symmetry property of $ \phi(\bv{p},\lambda) $ as in \autoref{proposition:phi_p_lambda_input_symmetric}, then there are $ n $ points such that $ \bv{p}_k = \mat{G}_k \bv{p}^*  $ and
\begin{equation}
	\phi(\bv{u},\lambda^*) = \phi(\bv{p}_k,\lambda^*)  \geq \phi(\bv{p},\lambda^*) \quad \forall \bv{p} \in \simplex{n},  k \in \{1,\dots,n\}.
\end{equation}
Thus, the upper concave envelope of $ \phi(\bv{p},\lambda^*) $  consists of the $ n+1 $ points
\begin{equation}
	\bv{p}_1,\bv{p}_2,\dots,\bv{p}_n,\bv{p}_{n+1} = \bv{u}.
\end{equation}
Note, that using this points we can construct $ (\bv{p},\xi,\eta) = (\bv{u},x,\reflectbox{F}(x)) $ as  follows
\begin{align}
	\bv{u} &= \sum_{k=1}^{n+1} w_k \bv{p}_k , \\
	x &= \sum_{k=1}^{n+1} w_k h_n(\bv{p}_k) 
	= (1-\epsilon) h_n(\bv{p}^*) + \epsilon \log_2 n, \\
	\reflectbox{F}(x) &= \sum_{k=1}^{n+1} w_k h_n(\mat{T} \bv{p}_k) 
	= (1-\epsilon) h_n(T  \bv{p}^*) + \epsilon \log_2 n.
\end{align}
Therefore, 
\begin{equation}
	\epsilon = \frac{x-h_n(\bv{p}^*)}{\log_2 n - h_n(\bv{p}^*)}.
\end{equation}
Since $ \epsilon \geq 0 $, this will be valid for $ x \geq h_n(\bv{p}^*) $, or in our terminology, for $ C < \log_2 n - h_n(\bv{p}^*) $. This completes the proof.


\end{document}